\documentclass[runningheads]{article}

\usepackage{amsmath}
\usepackage{enumerate}
\usepackage[english]{babel}
\usepackage{amssymb}
\usepackage{amscd}
\usepackage{authblk}

\usepackage{url}
\usepackage{stmaryrd}
\usepackage{tikz,pgf}
\usepackage{subfig}
\usepackage{cancel}
\usepackage{comment}
\usepackage[hidelinks, draft = false]{hyperref}
\usepackage[textsize=footnotesize]{todonotes}
\usepackage{complexity}
\usepackage{amsthm}

\usepackage{cleveref}
\usetikzlibrary{positioning}
\usetikzlibrary{shapes.geometric}
\usepackage{soul}
\usepackage{enumitem}
\usepackage{xspace}
\usetikzlibrary{decorations.pathmorphing}
\usetikzlibrary{arrows}
\usetikzlibrary{decorations.pathreplacing,calligraphy}
\definecolor{myBlue}{HTML}{2D2F92}

\definecolor{FusionRed}{HTML}{fc5c65}
\definecolor{BenukionBronze}{HTML}{e55039}
\definecolor{HighBlue}{HTML}{3c6382}
\definecolor{RoyalBlue}{HTML}{3867d6}
\definecolor{AlgalFuel}{HTML}{20bf6b}
\definecolor{GloomyPurple}{HTML}{8854d0}
\definecolor{Innuendo}{HTML}{a5b1c2}

\newtheorem{theorem}{Theorem}
\newtheorem{lemma}[theorem]{Lemma}

\newtheorem{corollary}[theorem]{Corollary}

\newcommand{\qedclaim}{\hfill $\diamond$ \smallskip}

\newcommand{\etal}{{\em et al.}\xspace}
\tikzset{snake it/.style={decorate, decoration={snake, amplitude=.5mm}}}

\usepackage[T1]{fontenc}
\usepackage{graphicx}

\begin{document}
\title{An Algorithm for Monitoring Edge-geodetic Sets in Chordal Graphs}
%
%
\author[1]{Clara Marcille}
\author[2]{Nacim Oijid\thanks{The second author was partly supported by the Kempe Foundation Grant No. JCSMK24-515 (Sweden).} }

\affil[1]{ Univ Lyon, EnsL, UCBL, CNRS, LIP, F-69342, LYON Cedex 07, France}

\affil[2]{Department of Mathematics and Mathematical Statistics, Umeå University, Sweden}
\maketitle              
\begin{abstract}
A \textit{monitoring edge-geodetic set} (or meg-set for short) of a graph is a set of vertices $M$ such that if any edge is removed, then the distance between some two vertices of $M$ increases. This notion was introduced by Foucaud \textit{et al.} in 2023 as a way to monitor networks for communication failures. As computing a minimum meg-set is hard in general, recent works aimed to find polynomial-time algorithms to compute minimum meg-sets when the input belongs to a restricted class of graphs. Most of these results are based on the property of some classes of graphs to admit a unique minimal meg-set, which is then easy to compute. In this work, we prove that chordal graphs also admit a unique minimal meg-set, answering a standing open question of Foucaud \textit{et al.}
\end{abstract}

\section{Introduction and notations}
In the area of network monitoring problems, graphs are used to emulate networks and their possible behaviors. For instance, there is a well-established way to represent a network by a graph where each terminal is represented as a vertex, such that two vertices are adjacent if the corresponding terminals can communicate directly. In this representation, deleting an edge can be interpreted as the failure of a communication, see for instance~\cite{bampas2015network,beerliova2005network,bejerano2003robust,foucaud2022monitoring}. Monitoring edge-geodetic sets (or meg-sets for short) were introduced by Foucaud \textit{et al.} in a recent paper~\cite{foucaud2023monitoring}, as a way to detect such failures.
For a graph $G=(V, E)$, an edge $e\in E$ and two vertices $a, b\in V$, we say that $\{a, b\}$ \textit{monitors} the edge $e$ if it lies on all the shortest paths between $a$ and $b$. Note that deleting $e$ would increase the distance between $a$ and $b$, which is detected. By extension, we say $M\subset V$ monitors $e$ if there exist two vertices of $M$ monitoring $e$. Finally, we say that $M$ is a \textit{meg-set} if it monitors all edges of $E$. As stated in~\cite{foucaud:hal-04494089}, this can be understood as choosing a set of ``probes'' in the network which measure the distance to one another, and doing so, detecting that a failure in the network happened if one of the distance increases. As such measurements can be costly, we want to minimize the number of probes, that is, find the size of a minimum meg-set (w.r.t. size). This problem, known as \textsc{MEG-set}, is hard in general~\cite{haslegrave2023monitoring}, but can be solved in polynomial time on some classes of graphs using a characterization from~\cite{foucaud:hal-04494089}. 

A vertex $u\in V(G)$ is \textit{mandatory} if it belongs to all meg-sets of $G$, and $Mand(G)$ denotes the set of mandatory vertices. In the same paper, the authors prove that computing the set of mandatory vertices can be done in polynomial time. This computation also provides a polynomial-time algorithm to compute the meg-sets of graphs which are \textit{meg-minimal}, that is, with the property of admitting a unique minimal meg-set (w.r.t. inclusion), which is thus the set of mandatory vertices. These graph classes include namely interval graphs, cographs, block graphs, and well-partitioned chordal graphs~\cite{foucaud:hal-04494089}. In a more recent work~\cite{foucaud2025characterizingoptimalmonitoringedgegeodetic}, the authors prove that strongly chordal graphs are also meg-minimal. We refer the reader to the corresponding papers for a formal definition of these classes, and give an overview of the inclusion relations between these classes in Figure~\ref{fig:MEGclasses}. As all the above-mentioned graph classes (except cographs) are subclasses of chordal graphs, the next step is to inquire whether or not chordal graphs are meg-minimal. In particular, the best known algorithm on chordal graphs from~\cite{foucaud2024algorithms} is an \textsc{FPT} algorithm parameterized by treewidth plus solution size.

\begin{figure}[h!]
    \centering
    \begin{tikzpicture}[inner sep = .5mm, scale =.9]
    \tikzstyle{every node}=[ellipse, minimum width=60pt, align=center]
    \node[ellipse, draw, rounded corners, fill=green!70, minimum height=1cm, align=center] (trees) at (4,-0.5) {\scriptsize Trees~\cite{foucaud2023monitoring}};

    \node[ellipse, draw, rounded corners, fill=green!70,  minimum height=1cm, align=center] (complete) at (1,-1.5) {\scriptsize Complete~\cite{foucaud2023monitoring}};

    \node[ellipse, draw, rounded corners, fill=green!70,  minimum height=1cm, align=center] (split) at (-2,2.5) {\scriptsize Split~\cite{foucaud:hal-04494089}};

    \node[ellipse, draw, rounded corners, fill=green!70,  minimum height=1cm, align=center] (propint) at (-5,-.5) {\scriptsize Proper Interval\\ \cite{foucaud:hal-04494089}};

    \node[ellipse, draw, rounded corners, fill=green!70,  minimum height=1cm, align=center] (int) at (-5,1) {\scriptsize Interval~\cite{foucaud2024algorithms}};

    \node[ellipse, draw, rounded corners, fill=green!70,  minimum height=1cm, align=center] (strchordal) at (-5,2.5) {\scriptsize Strongly Chordal\\ \cite{foucaud2025characterizingoptimalmonitoringedgegeodetic}};

    \node[ellipse, draw, rounded corners, fill=green!70,  minimum height=1cm, align=center] (WPchordal) at (4,2.5) {\scriptsize W-P chordal \\ \cite{foucaud:hal-04494089}};

    \node[ellipse, draw, rounded corners, fill=green!70,  minimum height=1cm, align=center] (cograph) at (1, 4.5) {\scriptsize Cographs\\ \cite{foucaud:hal-04494089}};

    \node[ellipse, draw, rounded corners, fill=blue!30,  minimum height=1cm, align=center] (chordal) at (-2,4) {\scriptsize Chordal};

    \node[ellipse, draw, rounded corners, fill=orange!30,  minimum height=1cm, align=center] (perfect) at (-.5,6) {\scriptsize Perfect};

    \draw[->] (int.south) to (propint.north)[]{};
    \draw[->] (propint.east) to (complete.west)[]{};
    \draw[->] (strchordal.south) to (int.north)[]{};
    \draw[->] (chordal.south west) to (strchordal.north east)[]{};
    \draw[->] (chordal.south east) to (WPchordal.north west)[]{};
    \draw[->] (WPchordal.south) to (trees.north)[]{};
    \draw[->] (WPchordal.south west) to (complete.north east)[]{};
    \draw[->] (chordal.south) to (split.north)[]{};
    \draw[->] (perfect.south west) to (chordal.north)[]{};
    \draw[->] (perfect.south east) to (cograph.north)[]{};
    \draw[->] (cograph.south) to (complete.north)[]{};
    \draw[->] (split.south) to (complete.north west)[]{};
    
    \end{tikzpicture}
    \caption{A representation of the inclusion relations between the classes of graphs mentioned in the introduction. Our contribution is shown in blue, and the graph classes that are known not to be meg-minimal are in orange (perfect graphs). All the other depicted graph classes were previously known to be meg-minimal.}
    \label{fig:MEGclasses}
\end{figure}

In this work, we answer the question of the complexity of finding a minimum meg-set when the input graph is chordal by proving the following theorem:
\begin{theorem}\label{thm:mainthm}
    Let $G$ be a chordal graph. Then $Mand(G)$ is a meg-set of $G$.
\end{theorem}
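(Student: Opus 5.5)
The plan is to fix an arbitrary edge $xy$ of $G$ and exhibit two mandatory vertices monitoring it. I would rely on the characterization of mandatory vertices from~\cite{foucaud:hal-04494089}. As I recall it, a vertex $v$ is mandatory if and only if it has a neighbour $u$ such that every $w\in N(v)\setminus N[u]$ has a common neighbour with $u$ other than $v$. Equivalently, $v$ is \emph{not} mandatory if and only if for every $u\in N(v)$ there is some $w\in N(v)$ such that $u$–$v$–$w$ is the unique shortest $u$–$w$ path. Among all pairs $\{a,b\}$ that monitor $xy$, I choose one maximizing $d(a,b)$, and I orient it so that $d(a,x)<d(a,y)$; such pairs exist because $\{x,y\}$ itself monitors $xy$. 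The goal is to show that both $a$ and $b$ are mandatory. By symmetry it suffices to treat $a$, and then $Mand(G)$ monitors $xy$.

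Suppose $a$ is not mandatory, and set $d=d(a,b)$. Let $U=\{u\in N(a): d(u,b)=d-1\}$ be the set of first steps of shortest $a$–$b$ paths. Every shortest path from a vertex of $U$ to $b$ still uses $xy$. The key step is to choose a suitable $u\in U$ and apply non-mandatoriness of $a$ to it. This gives a $w\in N(a)$ with $w\notin N[u]$ and $N(u)\cap N(w)=\{a\}$. I then want to prove that $d(w,b)=d+1$ and that every shortest $w$–$b$ path passes through $a$, hence through $U$ and then $xy$. That would make $\{w,b\}$ a pair monitoring $xy$ at larger distance, contradicting maximality.

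To prove this claim I would use chordality. Suppose some $w$–$b$ walk of length at most $d+1$ avoids $a$, or $w$ is at distance at most $d$ from $b$. Concatenating it with $w$–$a$–$u$ and a shortest $u$–$b$ path gives a closed walk. I would extract from it an induced cycle through $a$, $u$ and $w$. Chordality forces such a cycle to be a triangle, and a triangle would need $uw\in E$. Alternatively, in the triangulated cycle, $a$ has a neighbour $z\neq u,w$ on the cycle adjacent to both $u$ and $w$, which gives a common neighbour of $u,w$ other than $a$. Either way this contradicts the choice of $w$. I would carry this out by considering a shortest counterexample path and the first vertex where it meets the layered structure $\{v: d(v,b)=i\}$. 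Minimal separators in chordal graphs are cliques, and this is what makes the layer-by-layer argument work.

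The main obstacle is the choice of $u$ when $|U|\ge 2$. A $w$ produced for one $u$ might be adjacent to, or share a neighbour with, another $u'\in U$, and then $w$ reaches $b$ through $u'$ at distance $d$. In that case $\{w,b\}$ still monitors $xy$ but is not farther apart, so maximality is not contradicted. I would first try to take $u\in U$ with $N(a)\cap N[u]$ maximal, which is possible because $U$ together with $a$ behaves like a clique-like structure in the chordal setting. Then any $u'\in U$ adjacent to $w$ would yield a chord contradicting this maximality. If that fails, I would instead refine the extremal choice of $(a,b)$: among pairs at maximum distance, take one where $|U|$ is minimal, or take $a$ to be last in a perfect elimination ordering restricted to the relevant layer.
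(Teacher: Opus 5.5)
\paragraph{Where the proposal breaks.} Your extremal strategy differs from the paper's route, which deletes a simplicial vertex $v$, shows that the vertices losing mandatoriness lie in $N(v)$, and runs a four-case analysis. Your route can be completed, but the central step as you state it is false, and the chordality argument you sketch for it is wrong.

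You want every shortest $w$--$b$ path to pass through $a$. You argue that the cycle closed by $u\,a\,w$ either forces $uw\in E$ or yields a vertex adjacent to both $u$ and $w$. That argument uses only chordality and the choice of $u,w$, not the maximality of $d(a,b)$, and in that generality it fails. Take the chordal graph with edges $at,au,au',ap,aw,as,tu,uu',u'p,pw,ux,u'x,xb$. Then:
\begin{itemize}
\item $\{a,b\}$ monitors $xb$ (so $y=b$) with $d=3$ and $U=\{u,u'\}$;
\item $a$ is not mandatory, because of the pendant vertex $s$;
\item $u$ is inclusion-maximal for $N(a)\cap N[u]$;
\item $w$ is a legitimate choice for $u$, since $N(u)\cap N(w)=\{a\}$.
\end{itemize}
Yet $w\,p\,u'\,x\,b$ is a shortest $w$--$b$ path avoiding $a$. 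In the cycle $a\,u\,x\,u'\,p\,w\,a$ the chords at $a$ go to $u'$ and $p$, and neither is adjacent to both $u$ and $w$. So the real danger is not $w$ seeing another vertex of $U$. It is $w$ reaching $U$ by a detour of length two that avoids $a$, and none of your fallback choices of $u$ or of $(a,b)$ addresses this. Separately, your remark that every shortest path from $U$ to $b$ uses $xy$ is false when $a=x$, since then $U=\{y\}$.

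\paragraph{What is missing.} Four steps complete the argument.
\begin{enumerate}
\item \emph{$U$ is a clique.} For non-adjacent $u_1,u_2\in U$, shortest paths from them to $b$ together with $a$ close a cycle in which $a$ needs a chord. But every other vertex of that cycle is at distance at most $d-2$ from $b$. Consequently any $u\in U$ works, since a vertex of $U$ adjacent to $w$ would be a second common neighbour of $u$ and $w$; in particular $w\notin U$.
\item \emph{$d(w,b)=d+1$.} Otherwise $d(w,b)=d$. Close a cycle with $u\,a\,w$, a shortest path $wq_1\cdots b$, and a shortest $u$--$b$ path. The chord at $a$ can only go to $q_1$, which puts $q_1\in U$ and makes it a common neighbour of $u$ and $w$.
\item \emph{Replace your claim.} The correct statement is: every shortest $w$--$b$ path $wq_1q_2\cdots b$ has $q_1=a$ or $q_2\in U\setminus\{u\}$. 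First, $q_2\neq u$, otherwise $q_1$ is a second common neighbour. Next, the chord at $a$ goes to $q_1$ or $q_2$. If $aq_1\in E$, then $uq_1\notin E$, so the cycle through $u\,a\,q_1$ forces $aq_2\in E$.
\item \emph{Conclude.} Now $a\,q_2\cdots b$ is a shortest $a$--$b$ path, so $xy$ lies on $q_2\cdots b$ unless $a=x$. If $a=x$, then $U=\{y\}=\{u\}$, contradicting $q_2\neq u$. Thus $\{w,b\}$ monitors $xy$ at distance $d+1$, contradicting maximality.
\end{enumerate}

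With these repairs your route gives a short direct proof, bypassing the paper's induction. It even yields a stronger fact: every farthest monitoring pair of an edge consists of mandatory vertices.
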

In other words, we prove that \textsc{MEG-set} can be solved in polynomial time on chordal graphs.
This result can be seen as a more fine-grained proof of Theorem 6.1 of~\cite{foucaud2025characterizingoptimalmonitoringedgegeodetic}, generalized to chordal graphs, thus requiring more work to handle the new possibilities. In particular, they use a strong elimination ordering for their induction, while we generalize to a perfect elimination ordering. In addition to that, we want to acknowledge that some technical lemmas are similar between the two proofs. For instance, Claim 6.3 of~\cite{foucaud2025characterizingoptimalmonitoringedgegeodetic} is very close in spirit to Lemma~\ref{lemma:u-mand-in-G-v-equiv-u-mand-in-G}.

We remind that if $G$ is a graph, a vertex $v$ of $G$ is said to be {\em simplicial} if its neighborhood is a clique. Also a graph $G$ is said to be chordal if it has a perfect elimination ordering, i.e., an ordering of its vertices $v_1, \dots, v_n$ such that for any $1 \le i \le n-1$, $N(v_i) \cap\{v_{i+1}, v_{i+2}, \dots, v_n\}$ is a clique, where $N(v_i)$ is the neighborhood of the vertex $v_i$.

Throughout this paper, we denote a path on the vertices $a_1, a_2, \dots, a_k$ by the concatenation $a_1a_2\dots a_k$. We also denote a cycle the same way, though repeating the first vertex at the end. For instance, the cycle of size three on $a, b, c$ is denoted $abca$. Finally, to be coherent with previous works on meg-sets, we call an \textit{induced $2$-path} $abc$ in a graph $G$ a set of three vertices $a, b, c$ of $G$ such that $abc$ is an induced $P_3$.
\section{Some general lemmas}

\begin{theorem}[Foucaud \etal~{\cite[Theorem 4.1]{foucaud:hal-04494089}}]\label{theorem:allmeg}
Let $G$ be a graph. A vertex $v\in V(G)$ is mandatory if and only if there exists $w \in N(v)$ such that for any vertex $x\in N(v)$, the induced $2$-path $wvx$ is part of a $4$-cycle.
\end{theorem}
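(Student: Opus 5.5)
This characterization (Theorem 4.1 of Foucaud \etal) is cited, but I would prove it directly by splitting into the two directions, in both cases using the basic observation that an edge $vx$ is monitored by a pair $\{a,b\}$ only if $vx$ lies on every shortest $a$–$b$ path.

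\emph{Sufficiency.} Suppose some $w\in N(v)$ satisfies the condition, and let $M$ be a meg-set not containing $v$. Then the edge $vw$ is monitored by some pair $a,b\in M$ with $a,b\neq v$.
\begin{enumerate}[label=(\roman*)]
\item Every shortest $a$–$b$ path $P$ uses $vw$. Since $v$ is not an endpoint, $v$ is an interior vertex of $P$.
\item Hence $P$ contains $v$ with two neighbours on $P$, namely $w$ and some $x\in N(v)$ with $x\neq w$.
\item Because $P$ is a shortest path, $x$ and $w$ are non-adjacent. So $wvx$ is an induced $2$-path.
\item By hypothesis $wvx$ lies on a $4$-cycle $wvxyw$ with $y\neq v$.
\item Replacing the subpath $wvx$ of $P$ by $wyx$ gives a walk of the same length. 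Since $P$ was shortest, this walk is again a shortest $a$–$b$ path, and it avoids the edge $vw$.
\end{enumerate}
This contradicts (i), so $v\in M$. Thus $v$ belongs to every meg-set and is mandatory.

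\emph{Necessity.} Suppose that for every $w\in N(v)$ there is some $x_w\in N(v)$ such that $wvx_w$ is an induced $2$-path lying on no $4$-cycle. I would show that $V\setminus\{v\}$ is a meg-set, so $v$ is not mandatory.
\begin{enumerate}[label=(\roman*)]
\item Edges not incident to $v$ are monitored by their own endpoints. Indeed, if $a,b\neq v$ are adjacent, the edge $ab$ is the unique shortest $a$–$b$ path.
\item For an edge $vw$, take $x=x_w$. Then $d(w,x)=2$, because $w$ and $x$ are non-adjacent but both adjacent to $v$.
\item Every common neighbour of $w$ and $x$ other than $v$ would close a $4$-cycle through $wvx$. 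There is none, so $v$ is the unique common neighbour of $w$ and $x$.
\item Therefore $wvx$ is the unique shortest $w$–$x$ path, and the pair $\{w,x\}$ monitors $vw$.
\end{enumerate}
Every edge is thus monitored by $V\setminus\{v\}$.

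\emph{Edge cases and main obstacle.} The main point needing care is the degenerate situations. If $v$ is isolated, no $w\in N(v)$ exists, so the condition fails. In that case $V\setminus\{v\}$ trivially monitors all edges, which is consistent with $v$ not being mandatory. If $N(v)=\{w\}$, or more generally if every $x\in N(v)$ is $w$ itself or adjacent to $w$, then there is no induced $2$-path $wvx$ and the condition holds vacuously. Sufficiency must then handle the fact that $v$ cannot be an interior vertex of a shortest path through $w$; this is covered by step (iii), since no non-adjacent second neighbour $x$ exists, so $v$ cannot be interior and hence must lie in $M$. I would therefore read the condition as ranging only over those $x$ for which $wvx$ is an induced $2$-path.
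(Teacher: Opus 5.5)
Your proof is correct, and it is the standard argument behind this characterization, which the paper does not reprove but imports from Foucaud et al. For sufficiency you reroute a shortest $a$--$b$ path through the fourth vertex $y$ of the $4$-cycle $wvxyw$ so that it avoids $vw$; for necessity you exhibit $V\setminus\{v\}$ as a meg-set, with each edge $vw$ monitored by $\{w,x_w\}$ because $v$ is their unique common neighbour, and your reading of the condition as ranging only over those $x\in N(v)$ for which $wvx$ is an induced $2$-path is the intended one (it is also how the paper's algorithm later uses supports).
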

If it exists, such a vertex $w$ is called a \textit{support} of $v$.

\begin{lemma}\label{lemma: simplicial vertices are mandatory}
Let $G$ be a graph, and $v$ be a simplicial vertex of $G$. Then $v \in Mand(G)$.
\end{lemma}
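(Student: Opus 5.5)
The plan is to apply Theorem~\ref{theorem:allmeg} directly. Since $v$ is simplicial, $N(v)$ is a clique. The key observation is that no induced $2$-path has $v$ as its middle vertex: for any $w, x \in N(v)$ with $w \neq x$, the vertices $w$ and $x$ are adjacent, so $wvx$ is a triangle and not an induced $P_3$. The condition of Theorem~\ref{theorem:allmeg} should therefore hold vacuously.

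First I will handle the degenerate cases. If $N(v)=\emptyset$, no support $w$ exists, so Theorem~\ref{theorem:allmeg} cannot be applied as stated. Here I argue directly. An isolated vertex lies on no edge, so it is not needed for monitoring. However, under the convention of \cite{foucaud:hal-04494089}, the meg-sets of an isolated vertex or trivial component are taken to contain that vertex. I will note this convention, or assume $G$ has no isolated vertices, as the earlier papers do.

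Next, if $N(v)\neq\emptyset$, I pick any $w\in N(v)$ as the candidate support and check the condition for every $x\in N(v)$. If $x=w$, the triple $wvx$ is not a $2$-path at all. If $x\neq w$, then $wx\in E(G)$ because $N(v)$ is a clique, so $wvx$ is not induced. In both cases there is no induced $2$-path $wvx$, and the requirement that every such path lie in a $4$-cycle holds vacuously. This is the reading of Theorem~\ref{theorem:allmeg} used in \cite{foucaud:hal-04494089}, where the quantifier ranges over those $x$ for which $wvx$ is an induced $2$-path. Hence $w$ is a support of $v$, and $v\in Mand(G)$.

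As a sanity check independent of the theorem, I would also give the direct argument. Let $vw$ be any edge at $v$, and suppose $a,b\in M$ monitor $vw$ with $v\notin\{a,b\}$. A shortest $a$--$b$ path through $vw$ then uses $v$ as an internal vertex, entering through some $x\in N(v)$. Since $xw$ is an edge, the path can be shortcut, which contradicts the assumption that it is shortest. So any pair monitoring $vw$ must contain $v$, and $v$ lies in every meg-set. The only real obstacle is this interpretive point about vacuous satisfaction and isolated vertices; the rest is immediate.
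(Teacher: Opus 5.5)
Your proposal is correct and follows the paper's argument: the paper only remarks that the lemma is a direct consequence of Theorem~\ref{theorem:allmeg}, and your observation is what that remark amounts to, namely that any $w\in N(v)$ is a support because $N(v)$ being a clique leaves no induced $2$-path $wvx$. Your caveat about isolated vertices is well taken, since read literally Theorem~\ref{theorem:allmeg} says an isolated vertex is \emph{not} mandatory; rather than invoking a convention attributed to the cited paper, it is cleaner to assume $v$ has a neighbor, which always holds where the paper uses the lemma ($v$ is there a simplicial vertex of a connected graph with at least two vertices).
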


Lemma~\ref{lemma: simplicial vertices are mandatory} is a direct consequence of Theorem~\ref{theorem:allmeg}, but was already mentioned earlier in~\cite{foucaud2023monitoring}.

\begin{lemma}\label{lemma:induced-2-path-in-a-cycle-has-a-chord}
    If $G = (V,E)$ is a chordal graph, and $a,b,c \in V$ are three vertices such that $abc$ is an induced $2$-path of $G$, which is in a cycle $C$. Then there is a \textit{chord} in $C$ incident to $b$,  that is, an edge between $b$ and some vertex of $C$ different from $a$ and $c$. 
\end{lemma}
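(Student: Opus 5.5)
Since $abc$ is an induced $2$-path lying on the cycle $C$, the edges $ab$ and $bc$ are consecutive edges of $C$. Removing $b$ from $C$ therefore leaves a path $P = c\,x_1\,x_2 \dots x_k\,a$ in $G$ joining $c$ and $a$ that avoids $b$. The edge $ac$ is absent, so $C$ has length at least $4$ and $k \ge 1$. I will argue by contradiction: assume $b$ is adjacent to no vertex of $C$ other than $a$ and $c$, and then produce a chordless cycle of length at least $4$.

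The key step is to shorten $P$. Among all $a$--$c$ paths whose internal vertices lie in $V(C)\setminus\{a,b,c\}$, pick a shortest one, $Q = c\,y_1 \dots y_m\,a$. Such a path exists because $P$ is one. Being shortest, $Q$ is an induced path: a chord between two of its vertices would give a strictly shorter path of the same kind. Since $ac \notin E$, we have $m \ge 1$.

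Now form the cycle $b\,c\,y_1 \dots y_m\,a\,b$, of length $m+3 \ge 4$. There are no chords among the vertices of $Q$, because $Q$ is induced. There are no chords from $b$ to any $y_i$, because each $y_i$ is a vertex of $C$ other than $a$ and $c$, and by assumption $b$ has no such neighbours. So this is a chordless cycle of length at least $4$, which contradicts chordality. Hence $b$ has a neighbour on $C$ distinct from $a$ and $c$.

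The main obstacle is only the shortening step. One must ensure that the shortest path is still taken inside $V(C)$, so that the hypothesis about $b$'s neighbours on $C$ applies to its internal vertices; restricting the choice of paths as above takes care of this.
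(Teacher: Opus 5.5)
Your proof is correct and is essentially the paper's argument. The paper repeatedly shortcuts $C$ along chords not incident to $b$, producing smaller cycles through $abc$, until either a chord at $b$ appears or a chordless cycle of length at least $4$ would result, which is impossible since $ac\notin E$ and $G$ is chordal. Your shortest $a$--$c$ path inside $V(C)\setminus\{b\}$ is the extremal form of that iteration, and it makes explicit both that the process terminates and that the neighbour of $b$ you find lies on the original cycle $C$.
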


\begin{proof}
  By definition, $C$ is a $C_k$ for $k \ge 4$ of $G$. Thus, it is necessarily not induced (as $G$ is chordal and hence admits no induced $C_k$). Hence, there is a chord in $C$. Either it is adjacent to $b$ or it produced a smaller cycle $C'$ containing $\{a,b,c\}$. Up to applying the same argument to $C'$, as $abc$ is an induced $P_3$, $a$ and $c$ cannot be adjacent. Hence, when we cannot find such a $C'$, we obtain a chord adjacent to $b$.
\end{proof}

\begin{lemma}[Foucaud \etal~{\cite[Lemma 2.3]{foucaud2023monitoring}}]\label{lemma:cut-vertices}
    Let $G$ be a graph with a cut-vertex G and $C_1, C_2, \dots , C_k$ be the $k$ components obtained
when removing $v$ from G. If $S_1, S_2, \dots , S_k$ are meg-sets of the induced subgraphs $G[C_1\cup\{v\}], G[C_2\cup\{v\}], \dots, G[C_k\cup\{v\}]$, then $S=(S_1\cup \dots \cup S_k)-v$ is an meg-set of $G$.
\end{lemma}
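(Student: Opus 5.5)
The plan is to take an arbitrary edge $e$ of $G$, place it inside one of the pieces $G_i=G[C_i\cup\{v\}]$, and transfer the monitoring pair from $S_i$ to a pair in $S$. (In the statement, the cut-vertex is $v$; "cut-vertex $G$" is a typo.) Everything rests on one structural fact about how shortest paths interact with the cut-vertex $v$, which we establish first.

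\textbf{Step 1: shortest paths stay in their block.} Every edge of $G$ has both endpoints in some $C_i\cup\{v\}$, because there are no edges between distinct components $C_i$ and $C_j$. Let $x,y\in C_i\cup\{v\}$. A path from $x$ to $y$ that leaves $C_i\cup\{v\}$ must exit and re-enter through $v$, which repeats $v$. So every $x$--$y$ path of $G$ lies in $G_i$. Consequently $d_G(x,y)=d_{G_i}(x,y)$, and the shortest $x$--$y$ paths in $G$ are exactly the shortest $x$--$y$ paths in $G_i$.

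\textbf{Step 2: concatenation through $v$.} Let $x\in C_j$ and $y\in C_i$ with $i\neq j$. Every $x$--$y$ path passes through $v$. Hence the shortest $x$--$y$ paths in $G$ are exactly the concatenations of a shortest $x$--$v$ path and a shortest $v$--$y$ path, and these lie in $G_j$ and $G_i$ respectively.

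\textbf{Step 3: transferring the monitoring pair.} Fix an edge $e$ in $G_i$, and let $\{a,b\}\subseteq S_i$ monitor $e$ in $G_i$.
\begin{enumerate}
\item If $v\notin\{a,b\}$, then $a,b\in S$. By Step~1 the shortest $a$--$b$ paths are the same in $G$ as in $G_i$, so $\{a,b\}$ monitors $e$ in $G$.
\item If $a=v$, then $b\in C_i$. We need a substitute for $v$ in $S$. Since $v$ is a cut-vertex, $k\ge 2$, so there is some $j\neq i$. The graph $G_j$ contains at least one edge, since $C_j\neq\emptyset$ and $G$ is connected. Hence any meg-set of $G_j$ has at least two vertices, and $S_j$ contains some $c\neq v$, so $c\in C_j\cap S$. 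By Step~2, every shortest $c$--$b$ path in $G$ contains a shortest $v$--$b$ path of $G_i$. By Step~1 and the choice of $\{v,b\}$, every such path contains $e$. Therefore $\{c,b\}$ monitors $e$ in $G$.
\end{enumerate}

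The only delicate point is the substitution in the second case. It needs both $k\ge 2$ and $S_j\setminus\{v\}\neq\emptyset$, and we have checked both above. The rest is the routine path-structure argument of Steps~1 and~2. Since $e$ was arbitrary, $S$ monitors every edge of $G$, so $S$ is a meg-set of $G$.
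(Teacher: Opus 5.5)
Your proof is correct. The paper gives no proof of its own here, since the lemma is imported from Foucaud \etal. Your argument is the standard one: geodesics between two vertices of $C_i\cup\{v\}$ never leave $G_i$, and a monitoring pair $\{v,b\}$ is replaced by $\{c,b\}$ with $c\in S_j\setminus\{v\}$, $j\neq i$, because every $c$--$b$ geodesic is a $c$--$v$ geodesic followed by a $v$--$b$ geodesic of $G_i$. Its only implicit assumption is that $G$ is connected, which you use to guarantee $S_j\setminus\{v\}\neq\emptyset$; this is the usual convention for meg-sets, and it holds where the paper applies the lemma in Lemma~\ref{lemma:2-connect}.
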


\section{Proof of the main theorem}

We consider $G$ a chordal graph which is a minimal counter-example (w.r.t. the number of vertices) to Theorem~\ref{thm:mainthm}, and we prove that $G$ cannot exist. To achieve this, let $v$ be a simplicial vertex of $G$. By minimality of $G$, since  $G -\{v\}$ is chordal, it is not a counter example. Hence, $M' = Mand(G -\{v\})$ is a meg-set of $G -\{v\}$. Let $W \subset V(G -\{v\})$ such that for all $w \in W$, $w$ is mandatory in $G -\{v\}$ but not in $G$. Let $M = \{v\} \cup M'\setminus W$. We prove that $M$ only contains mandatory vertices and that it is a meg-set of $G$, contradicting the fact that it is a counterexample.

\subsection{Overview of the proof}
The proof of Theorem~\ref{thm:mainthm} considers a minimum counter-example $G$ (w.r.t. the number of vertices) to its statement, and apply the well-known elimination rule of chordal graphs, that is, we consider a simplicial vertex $v$ of $G$ and remove it. This yields a chordal graph $G'$ for which Theorem~\ref{thm:mainthm} holds. In Lemma~\ref{lemma:e-monitored-by-ab-nonadjacent-to-v}, we prove that the mandatory vertices of $G -\{v\}$ keep monitoring in $G$ the edges they monitor in $G -\{v\}$, that is, if an edge of $G -\{v\}$ is not monitored in $G$, then it is monitored by at least one vertex of $W=Mand(G -\{v\})\cap (G-Mand(G))$. We then prove in Lemma~\ref{lemma:u-mand-in-G-v-equiv-u-mand-in-G} that vertices of $W$ all belong to the neighborhood of $v$.

\subsection{Preliminaries}

Next lemma states that adding a simplicial vertex cannot create a shortcut.

\begin{lemma}\label{lemma:e-monitored-by-ab-nonadjacent-to-v}
    If $a,b \in V(G) -\{v\}$ are two vertices such that $\{a,b\}$ monitors an edge $e$ (in $G -\{v\}$), then $\{a, b\}$ monitors $e$ in $G$.
\end{lemma}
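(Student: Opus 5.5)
The approach is to show that adding the simplicial vertex $v$ changes neither distances nor the set of shortest paths between vertices of $G-\{v\}$. Once that is established, the statement is immediate. The edge $e$ lies in $G-\{v\}$, and the shortest $a$--$b$ paths in $G$ are exactly those in $G-\{v\}$. So if $e$ lies on all of them in $G-\{v\}$, it lies on all of them in $G$, and $\{a,b\}$ monitors $e$ in $G$.

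\textbf{Step 1: distances are preserved.} Take any $x,y\in V(G)-\{v\}$ and any $x$--$y$ path $P$ in $G$ that uses $v$. Since $x,y\neq v$, the vertex $v$ is an internal vertex of $P$, so it has two neighbours $p,q$ on $P$, appearing as $\dots p v q\dots$. Because $v$ is simplicial, $N(v)$ is a clique, so $p$ and $q$ are adjacent. Replacing $pvq$ by $pq$ gives an $x$--$y$ walk in $G-\{v\}$ that is strictly shorter than $P$. Hence $d_{G-\{v\}}(x,y)\le d_G(x,y)$. The reverse inequality holds because $G-\{v\}$ is a subgraph of $G$. So $d_{G-\{v\}}(x,y)=d_G(x,y)$.

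\textbf{Step 2: no shortest path uses $v$.} By the shortening argument of Step 1, any $a$--$b$ path through $v$ has length at least $d_G(a,b)+1$, so it is not a shortest path in $G$. Every shortest $a$--$b$ path in $G$ therefore avoids $v$. Such a path is a path of $G-\{v\}$, and by Step 1 its length equals $d_{G-\{v\}}(a,b)$, so it is a shortest path there. Conversely, every shortest $a$--$b$ path in $G-\{v\}$ is also a path in $G$ of length $d_G(a,b)$. The two sets of shortest paths therefore coincide, which completes the argument described in the first paragraph.

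I do not expect a real obstacle. The one point needing care is that $v$ must be internal to the path so that it has two path-neighbours. This holds because $a,b\in V(G)-\{v\}$.
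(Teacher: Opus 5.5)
Your proof is correct and follows the paper's idea: a path through the simplicial vertex $v$ can be shortcut along the edge joining its two path-neighbours, so the shortest $a$--$b$ paths in $G$ and in $G-\{v\}$ coincide. Your Step 1 states the distance equality $d_G(x,y)=d_{G-\{v\}}(x,y)$ explicitly, whereas the paper gets it only implicitly, from its remark that $v$ has at most two consecutive neighbours on $P$.
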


\begin{proof}
  Let $e \in E(G)$ be an edge monitored by $\{a,b\}$ where $a, b\in V(G) -\{v\}$, then it means that $e$ is on every shortest path between $a$ and $b$. Let $P$ be one of these shortest paths. Since $v$ is simplicial, all its neighbors in $P$ are adjacent. Thus, $v$ has at most two neighbors in $P$ and if so, they are consecutive on the path, otherwise, the path would have a chord. Thus, $P$ is still a shortest path in $G$. Moreover, a shortest path between $a$ and $b$ in $G$ cannot go through $v$, otherwise, it will contain $v$ and two of its neighbors. Since $v \notin \{a,b\}$, it would contain an induced triangle, which cannot happen for a shortest path. Finally, $e$ is in all the shortest path between $a$ and $b$ in $G$, and therefore is monitored by $\{a,b\}$.
\end{proof}

\begin{lemma} \label{lemma:u-mand-in-G-v-equiv-u-mand-in-G}
    If $u\notin N(v)$, then $u\in Mand(G -\{v\})$ if and only if $u\in Mand(G)$.
\end{lemma}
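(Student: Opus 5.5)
We use the characterization of Theorem~\ref{theorem:allmeg} in both $G$ and $G-\{v\}$. Since $u\notin N(v)$, the neighborhood of $u$ is the same in both graphs: $N_G(u)=N_{G-\{v\}}(u)$. The candidate supports $w$ and the set of induced $2$-paths $wux$ are therefore identical in the two graphs, because the induced $2$-paths centered at $u$ depend only on $N(u)$ and the adjacencies inside it. So the equivalence comes down to one fact: for $w,x\in N(u)$ with $wux$ an induced $2$-path, $wux$ lies in a $4$-cycle of $G-\{v\}$ if and only if it lies in a $4$-cycle of $G$.

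The direction from $G-\{v\}$ to $G$ is immediate, since $G-\{v\}$ is an induced subgraph of $G$. So a $4$-cycle witnessing $wux$ in $G-\{v\}$ is still one in $G$. This gives $u\in Mand(G-\{v\})\Rightarrow u\in Mand(G)$, with the same support.

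For the converse, suppose $wuxyw$ is a $4$-cycle in $G$ and $y=v$. We must find another fourth vertex $y'\neq v$ with $y'$ adjacent to both $w$ and $x$. Since $wux$ is induced, $w$ and $x$ are non-adjacent. But if $y=v$, then $w,x\in N(v)$, and $N(v)$ is a clique because $v$ is simplicial, so $w$ and $x$ would be adjacent, a contradiction. Hence $y\neq v$, and the $4$-cycle already lives in $G-\{v\}$. The cycle cannot use $v$ in any other position either, since $u\neq v$ (as $u\in V(G-\{v\})$) and $w,x\in N(u)$ with $u\notin N(v)$ forces $w,x$ to be vertices of $G-\{v\}$.

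The only point needing care is confirming that the notion of a $4$-cycle in Theorem~\ref{theorem:allmeg} here means a cycle $wuxyw$ through the $2$-path, possibly with a chord $uy$. Neither that chord nor the neighborhoods of $w$ and $x$ involve $v$ in any way that matters. Combining both directions with the same support $w$ gives the lemma.
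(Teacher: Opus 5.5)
Your proof is correct and follows essentially the same route as the paper. In both directions you keep the same support $w$, use $u\notin N(v)$ to see that the induced $2$-paths $wux$ coincide in $G$ and $G-\{v\}$, and rule out $v$ as the fourth vertex of the $4$-cycle, since $v$ being simplicial would force $wx\in E(G)$.
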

\begin{proof}
    Suppose first $u \in Mand(G)$. There exists a support vertex $w$ of $u$. Hence, for any vertex $x \in V(G)$, such that $wux$ is an induced $2$-path, there exists $v'$ such that $wuxv'w$ is a $4$-cycle. Moreover, we must have $v' \ne v$, otherwise, since $v$ is simplicial, $wx$ would be an edge, and thus, $wux$ would not be an induced $2$-path. Since $v\notin N(u)$, we have $w \ne v$. Thus, for any vertex $x \in V(G) - \{v\}$, if $wux$ is an induced $2$-path, the same vertex $v'$ ensures that $wuxv'w$ is a $4$-cycle, and thus $w$ is a support for $u$, and $u \in Mand(G -\{v\})$.

    Reciprocally, suppose that $u \in Mand(G -\{v\})$, and let $w$ be a support for $u$ in $G -\{v\}$. Since $v \notin N(u)$, any induced $2$-path $wux$ of $G$ is an induced $2$-path of $G -\{v\}$. Thus, if $v'$ is such that $wuxv'w$ is a $4$ cycle in $G -\{v\}$, $wuxv'w$ is also a $4$-cycle of $G$. Thus, $u$ is mandatory in $G$.
\end{proof}

\begin{corollary}\label{corr:W}
    $W \subset N(v)$.
\end{corollary}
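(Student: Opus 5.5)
The corollary follows from Lemma~\ref{lemma:u-mand-in-G-v-equiv-u-mand-in-G} by contraposition. Let $w \in W$. By definition of $W$, $w$ is a vertex of $G -\{v\}$ that lies in $Mand(G -\{v\})$ but not in $Mand(G)$. In particular $w \neq v$, so it makes sense to ask whether $w$ belongs to $N(v)$.

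Suppose, towards a contradiction, that $w \notin N(v)$. Then Lemma~\ref{lemma:u-mand-in-G-v-equiv-u-mand-in-G} applies with $u = w$. It gives that $w \in Mand(G -\{v\})$ holds if and only if $w \in Mand(G)$ holds. Since $w \in Mand(G -\{v\})$, we get $w \in Mand(G)$. This contradicts $w \in W$, so $w \in N(v)$.

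As $w$ was arbitrary in $W$, this gives $W \subset N(v)$. No step here is an obstacle: all the work was already done in Lemma~\ref{lemma:u-mand-in-G-v-equiv-u-mand-in-G}. The only points to check are that $v \notin W$, which holds because $W \subset V(G -\{v\})$, and that the lemma's hypothesis $u \notin N(v)$ is exactly the negation of the conclusion we want.
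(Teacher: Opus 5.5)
Your proof is correct and is exactly the intended argument: the paper gives no separate proof of this corollary and treats it as an immediate consequence of Lemma~\ref{lemma:u-mand-in-G-v-equiv-u-mand-in-G} by contraposition, which is what you spell out. Your side check that $w \neq v$, since $W \subset V(G-\{v\})$, is the only bookkeeping needed.
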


Recall that $M = \{v\}\cup Mand(G -\{v\}) \setminus W$. 

\begin{lemma}\label{lemma:M-subset-mand-G}
    $M \subset Mand(G)$
\end{lemma}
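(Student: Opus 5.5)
We need to show every element of $M = \{v\}\cup Mand(G-\{v\})\setminus W$ is mandatory in $G$, and the plan is a short case check against the definitions already in place.

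\textbf{The vertex $v$.} Since $v$ is simplicial in $G$, Lemma~\ref{lemma: simplicial vertices are mandatory} gives $v\in Mand(G)$.

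\textbf{The remaining vertices.} Let $u\in Mand(G-\{v\})\setminus W$. By definition, $W$ is exactly the set of vertices of $G-\{v\}$ that are mandatory in $G-\{v\}$ but not in $G$. So any $u$ that is mandatory in $G-\{v\}$ and lies outside $W$ must be mandatory in $G$. No appeal to Theorem~\ref{theorem:allmeg} is needed here.

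The only subtlety is reading $M=\{v\}\cup M'\setminus W$ correctly. Since $v\notin W$ (as $W\subset V(G-\{v\})$), both parsings $(\{v\}\cup M')\setminus W$ and $\{v\}\cup(M'\setminus W)$ give the same set. Hence the two cases above cover all of $M$, and $M\subset Mand(G)$.

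Corollary~\ref{corr:W} and Lemma~\ref{lemma:u-mand-in-G-v-equiv-u-mand-in-G} are not needed for this inclusion. They only show that the vertices removed from $M'$ all lie in $N(v)$, which will matter later when proving that $M$ is a meg-set. There is no real obstacle in this step; the genuine difficulty is deferred to showing that the edges monitored in $G-\{v\}$ by vertices of $W$ are still monitored by $M$ in $G$.
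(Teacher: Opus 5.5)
Your proof is correct and is essentially the paper's argument: the paper also obtains $v\in Mand(G)$ from Lemma~\ref{lemma: simplicial vertices are mandatory}, and it handles the vertices of $N(v)\cap M$ directly from the definition of $W$. The only difference is that the paper treats the vertices of $Mand(G-\{v\})\setminus N(v)$ separately via Lemma~\ref{lemma:u-mand-in-G-v-equiv-u-mand-in-G}, which, as you observe, is redundant because $W$ is by definition the full set $Mand(G-\{v\})\setminus Mand(G)$.
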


\begin{proof}
    First, as $v$ is simplicial, by Lemma~\ref{lemma: simplicial vertices are mandatory}, we have $v \in Mand(G)$. Moreover, by Lemma~\ref{lemma:u-mand-in-G-v-equiv-u-mand-in-G}, any vertex in $Mand(G -\{v\}) \setminus(N(v))$ is also mandatory in $G$. Finally, if $u \in (N(v) \cap M)$, then by hypothesis $u \in Mand(G -\{v\}) \setminus W$, and thus $u$ is also mandatory in $G$. Thus, $M$ only contains mandatory vertices, and $M \subset Mand(G)$. 
\end{proof}

\begin{lemma}\label{lemma:2-connect}
    $G$ is $2$-vertex connected.
\end{lemma}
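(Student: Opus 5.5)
We argue by contradiction using the minimality of $G$ together with Lemma~\ref{lemma:cut-vertices}. First note that $G$ is connected: otherwise each component is a smaller chordal graph, so by minimality its mandatory set is a meg-set of it. Mandatory vertices of a component are exactly the mandatory vertices of $G$ inside it, since the support condition of Theorem~\ref{theorem:allmeg} only involves neighbours and $4$-cycles, which stay within a component. The union of these sets monitors every edge of $G$, so $Mand(G)$ would be a meg-set, a contradiction. We may also assume $G$ has at least three vertices, since graphs on at most two vertices are trivially handled: all vertices are simplicial, hence mandatory by Lemma~\ref{lemma: simplicial vertices are mandatory}, and $V(G)$ is a meg-set.

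Suppose now $G$ has a cut-vertex $c$, and let $C_1,\dots,C_k$ ($k\ge 2$) be the components of $G-c$. Set $G_i=G[C_i\cup\{c\}]$. Each $G_i$ is an induced subgraph of $G$, hence chordal, with fewer vertices, so by minimality $S_i=Mand(G_i)$ is a meg-set of $G_i$. By Lemma~\ref{lemma:cut-vertices}, $S=(S_1\cup\dots\cup S_k)-c$ is a meg-set of $G$. It then suffices to show $S\subseteq Mand(G)$: every superset of a meg-set is a meg-set, so $Mand(G)$ would be a meg-set, contradicting the choice of $G$.

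The key step, and the main obstacle, is to show that a vertex $u\in C_i$ mandatory in $G_i$ is mandatory in $G$. We use Theorem~\ref{theorem:allmeg}. Let $w$ be a support of $u$ in $G_i$; then $w\in C_i\cup\{c\}$. If $u\notin N(c)$, the neighbourhood of $u$ in $G$ equals that in $G_i$. Any induced $2$-path $wux$ in $G$ lies in $G_i$, so it completes to a $4$-cycle in $G_i\subseteq G$, and $w$ is still a support.

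If $u\in N(c)$, the neighbourhood of $u$ is again entirely inside $C_i\cup\{c\}$, since $u\in C_i$ has no neighbours in other components. So every induced $2$-path $wux$ in $G$ has $w,x\in V(G_i)$, and it is induced in $G$ iff it is induced in $G_i$, as $G_i$ is induced. Its completing $4$-cycle in $G_i$ is a $4$-cycle in $G$. Hence in all cases $w$ remains a support of $u$ in $G$, and $u\in Mand(G)$.

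The point to check carefully is that $c$ itself was removed from $S$, so no claim about $c$ is needed. With that verified, $S\subseteq Mand(G)$ holds, the contradiction follows, and $G$ has no cut-vertex, i.e., $G$ is $2$-vertex connected.
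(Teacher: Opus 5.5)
Your proof is correct and follows the paper's argument. Connectedness comes from minimality; then Lemma~\ref{lemma:cut-vertices} is applied to the pieces $G_i$ at a cut-vertex, and the support characterization of Theorem~\ref{theorem:allmeg} shows that every vertex of $Mand(G_i)$ other than the cut-vertex remains mandatory in $G$. Your case split on whether $u\in N(c)$ is redundant, since both cases rest only on $N_G(u)=N_{G_i}(u)$ and on $G_i$ being an induced subgraph, but it is harmless.
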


\begin{proof}
    First, by minimality of $G$, $G$ is connected, otherwise the union of the meg-sets of each of its connected components would be a meg-set of $G$, and would only contain mandatory vertices.

    Suppose that $G$ is not $2$-vertex connected. Thus, there exists $u \in V(G)$ such that $G-\{u\}$ is disconnected. Let $C_1, \dots, C_p$ be the connected components of $G-u$, and let $G_1, \dots, G_p$ the subgraph induced by $V(C_1) \cup \{u\}, \dots, V(C_p) \cup \{u\}$. By minimality of $G$, since $G_1, \dots, G_p$ are chordal, $M_1 = Mand(G_1), \dots, M_p = Mand(G_p)$ are meg-sets of $G_1, \dots, G_p$.  Let $M' = (M_1 \cup \dots \cup M_p) - \{u\}$. By Lemma~\ref{lemma:cut-vertices}, $M'$ is a meg-set of $G$. It remains to prove that $M'$ only contain mandatory vertices.

    If $v_i \in V(G_i) - \{u\}$ is mandatory, then there exists $w_i$ such that any induced $2$-path $w_iv_ix$ is in a $4$-cycle. As $v_i \ne u$, any induced $2$-path $w_iv_ix$ in $G$ is an induced $2$-path in $G_i$, and thus the $4$-cycle in $G -\{v\}$ is also in $G$. which proves that $v_i$ is mandatory in $G$.

    Finally, we proved that if $G$ is not $2$-vertex connected, then either one of its $2$-connected components is a smaller counter example, or $G$ itself is not a counter example, which concludes the proof.
\end{proof}

\begin{lemma}\label{lemma:ab-monitors-e-implies-all-shortest-path-to-an-extremity}
    Let $G = (V,E)$ be a graph and $e = ab$ be an edge of $G$. Let $\{\alpha,\beta\}$ be two vertices such that $\{\alpha, \beta\}$ monitors $e$. Then, all the shortest paths from $\alpha$ to $a$ contain $e$ or all the shortest paths from $\beta$ to $a$ contain $e$.
\end{lemma}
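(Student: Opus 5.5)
We will work with distances. Since $\{\alpha,\beta\}$ monitors $e=ab$, the edge $e$ lies on every shortest $\alpha$--$\beta$ path. In particular there is at least one such path, so $d(\alpha,\beta)$ is finite. Fix a shortest path $P$ from $\alpha$ to $\beta$; it traverses $e$ in some direction. Up to exchanging the names $\alpha$ and $\beta$, we may assume $P$ meets $a$ before $b$ when going from $\alpha$ to $\beta$. Then
\[ d(\alpha,\beta)=d(\alpha,a)+1+d(b,\beta), \]
with $d(\alpha,b)=d(\alpha,a)+1$ and $d(\beta,a)=d(\beta,b)+1$. The goal is to show that every shortest $\beta$--$a$ path contains $e$.

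\textbf{Key step.} Suppose a shortest path $Q$ from $\beta$ to $a$ avoids $e$. It has length $d(\beta,b)+1$. Concatenate a shortest path $R$ from $\alpha$ to $a$ (length $d(\alpha,a)$) with the reverse of $Q$. This gives a walk from $\alpha$ to $\beta$ of length
\[ d(\alpha,a)+d(\beta,b)+1=d(\alpha,\beta), \]
so it is a shortest path, and in particular a simple path. Neither $R$ nor $Q$ can contain $b$: if $R$ contained $b$ then $d(\alpha,b)<d(\alpha,a)$, and if $Q$ contained $b$ somewhere other than just before $a$ its length would exceed $d(\beta,b)+1$. Hence if $Q$ uses $e$, then $e$ is its last edge, which we excluded. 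Therefore $R$ does not contain $e$, and neither does $Q$. The concatenation is a shortest $\alpha$--$\beta$ path avoiding $e$, contradicting that $\{\alpha,\beta\}$ monitors $e$.

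\textbf{Conclusion and main obstacle.} It follows that every shortest $\beta$--$a$ path contains $e$, which proves the statement. Exchanging roles only swaps which of $\alpha,\beta$ satisfies the conclusion. The main care point is the bookkeeping: the orientation of $e$ on shortest paths must be consistent. This holds because, from the distance equalities, no shortest $\alpha$--$\beta$ path can traverse $e$ in the opposite direction, which would require $d(\alpha,b)+1+d(a,\beta)=d(\alpha,\beta)$. We must also confirm that the concatenated walk genuinely avoids $e$, which is the check carried out in the key step.
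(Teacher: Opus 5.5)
Your proof is correct and takes essentially the same route as the paper: orient $e$ along a shortest $\alpha$--$\beta$ path so that $d(\alpha,\beta)=d(\alpha,a)+1+d(b,\beta)$, then conclude that $\{a,\beta\}$ monitors $e$. Your concatenation argument (a shortest $\alpha$--$a$ path cannot contain $b$, so joining it to an $e$-avoiding shortest $a$--$\beta$ path would give an $e$-avoiding shortest $\alpha$--$\beta$ path) spells out the step the paper compresses into ``in particular''; your extra remarks about $Q$ meeting $b$ and about the opposite orientation of $e$ are harmless but not needed.
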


\begin{proof}
    Since $M$ is a meg-set of $G$, there must exist two vertices $\alpha, \beta$ monitoring $e$. That is, all shortest paths from $\alpha$ to $\beta$ contain $e$. If one of $\alpha, \beta$ is $a$, then we are done. Otherwise, it must be that $\alpha$ is closer to one endpoint of $e$, say $a$ than to $b$. Observe then that, since all shortest paths from $\alpha$ to $\beta$ contain $e$ they in particular contain $a$. Hence, since $a$ is closer to $\alpha$ than $b$, a shortest path from $\alpha$ to $\beta$ can be decomposed into a shortest path from $\alpha$ to $a$, the edge $ab$ and a shortest path from $b$ to $\beta$. In particular, $\{a, \beta\}$ monitors $e$.
\end{proof}

\begin{corollary}\label{corollary:a-in-M-implies-a-monitors-incident-edges}
        Let $G= (V,E)$ be a graph, $M$ a meg-set of $G$. If $a \in M$, then for any edge $e$ incident to $a$, there exists $b \in M$ such that $\{a,b\}$ monitors $e$.
\end{corollary}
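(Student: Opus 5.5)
The corollary follows directly from Lemma~\ref{lemma:ab-monitors-e-implies-all-shortest-path-to-an-extremity}, applied with the roles of the endpoints chosen so that the vertex $a\in M$ plays the part of one endpoint of $e$. Write $e=ac$, where $a\in M$ is the given endpoint. Since $M$ is a meg-set, there exist $\alpha,\beta\in M$ such that $\{\alpha,\beta\}$ monitors $e$, that is, $e$ lies on every shortest $\alpha$--$\beta$ path. The goal is to replace one of $\alpha,\beta$ by $a$ itself while keeping the other in $M$.

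If $a\in\{\alpha,\beta\}$, we simply take $b$ to be the other vertex and are done. Otherwise, every shortest $\alpha$--$\beta$ path passes through both $a$ and $c$. These two vertices are adjacent, so along any such path one of them is met first. I claim the order is the same for all shortest paths, and I will orient it so that $\alpha$ reaches $c$ before $a$. Indeed, $|d(\alpha,a)-d(\alpha,c)|\le 1$, and equality of the two distances is impossible, since a shortest path traverses $e$ and its endpoints lie at consecutive distances from $\alpha$. Hence, after possibly swapping $\alpha$ and $\beta$, we have $d(\alpha,c)<d(\alpha,a)$, so $d(\alpha,a)=d(\alpha,c)+1$ and $d(a,\beta)=d(\alpha,\beta)-d(\alpha,a)$.

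We take $b=\alpha$ and check that $\{a,\alpha\}$ monitors $e$, i.e.\ that every shortest $a$--$\alpha$ path uses $e$. Suppose some shortest $a$--$\alpha$ path $Q$ avoids $e$. Concatenate $Q$ (reversed, running from $\alpha$ to $a$) with a shortest $a$--$\beta$ path. This walk has length $d(\alpha,a)+d(a,\beta)=d(\alpha,\beta)$, so it is a shortest $\alpha$--$\beta$ path. It must contain $e$. The $Q$-part does not, by assumption. The $a$--$\beta$ part cannot contain $c$ either: its vertices lie at distance at least $d(\alpha,a)>d(\alpha,c)$ from $\alpha$, because on a geodesic from $\alpha$ to $\beta$ the distances from $\alpha$ increase strictly. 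This contradiction shows that $\{a,\alpha\}$ monitors $e$ with $\alpha\in M$.

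The only delicate point is this last step: making sure that the monitoring pair obtained from the lemma consists of $a$ together with a vertex of $M$, rather than $a$ together with an arbitrary vertex. This is why the replacement is done on the side of $\beta$, keeping $\alpha\in M$ as the partner of $a$. Everything else is a bookkeeping argument about distances along geodesics.
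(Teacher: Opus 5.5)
Your proof is correct and follows the paper's route. Like the paper, you apply Lemma~\ref{lemma:ab-monitors-e-implies-all-shortest-path-to-an-extremity} to a pair $\{\alpha,\beta\}\subseteq M$ that monitors $e$; you just inline that lemma's proof with a more explicit concatenation argument, and the ``delicate point'' you flag is automatic because both $\alpha$ and $\beta$ already lie in $M$.
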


\subsection{Proof of the main result}

In order to prove Theorem~\ref{thm:mainthm}, we prove the following lemma that directly implies it. The proof of Lemma~\ref{lemma:structure-of-Mand(G)} is subdivided in four lemmas (Lemma~\ref{lemma:case1} to~\ref{lemma:case4}).

\begin{lemma}\label{lemma:structure-of-Mand(G)}
 $Mand(G) = \{v\} \cup Mand(G-\{v\})\setminus W$ and is a meg-set of $G$.
\end{lemma}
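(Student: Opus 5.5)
Since every meg-set of $G$ contains $Mand(G)$, and Lemma~\ref{lemma:M-subset-mand-G} gives $M \subset Mand(G)$, it suffices to show that $M = \{v\}\cup Mand(G-\{v\})\setminus W$ is a meg-set of $G$. Then $Mand(G)\subset M$, so equality holds and the statement follows. I would check that every edge of $G$ is monitored by a pair of vertices of $M$, splitting into four cases (these should be Lemmas~\ref{lemma:case1} to~\ref{lemma:case4}):
\begin{enumerate}
\item edges $e$ of $G-\{v\}$ monitored in $G-\{v\}$ by a pair $\{a,b\}\subset M'\setminus W$;
\item edges of $G-\{v\}$ whose only monitoring pairs in $M'$ use exactly one vertex of $W$;
\item edges of $G-\{v\}$ whose only monitoring pairs in $M'$ use two vertices of $W$;
\item edges $vx$ incident to $v$.
\end{enumerate}
Case 1 is immediate from Lemma~\ref{lemma:e-monitored-by-ab-nonadjacent-to-v}, since $\{a,b\}\subset M$.

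The key structural fact I would establish first concerns a vertex $w\in W$. It lies in $N(v)$ by Corollary~\ref{corr:W}. It has a support $s$ in $G-\{v\}$, yet no support in $G$. The only new neighbor of $w$ in $G$ is $v$, so for every support $s$ of $w$ in $G-\{v\}$ two things hold. First, $swv$ is an induced $2$-path, so $s\notin N(v)$. Second, $swv$ lies in no $4$-cycle, so $w$ is the unique common neighbor of $s$ and $v$. Using chordality and Lemma~\ref{lemma:induced-2-path-in-a-cycle-has-a-chord}, I would try to upgrade this to a separation-type property: for suitable targets $b$, every shortest $v$--$b$ path passes through $w$, and $d(v,b)=d(w,b)+1$.

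For case 2, take $\{w,b\}$ monitoring $e$ in $G-\{v\}$ with $w\in W$ and $b\in M$. By Lemma~\ref{lemma:ab-monitors-e-implies-all-shortest-path-to-an-extremity} I may assume the relevant endpoint structure of $e$. I would replace $w$ by $v$ and show that $\{v,b\}$ monitors $e$ in $G$. A shortest $v$--$b$ path starts with an edge $vy$, $y\in N(v)$. If some $y\neq w$ satisfied $d(y,b)\le d(w,b)$ and avoided $e$, then the clique $N(v)$ together with the two geodesics would form a cycle containing $s w y$. Lemma~\ref{lemma:induced-2-path-in-a-cycle-has-a-chord} then yields chords which should produce a common neighbor of $s$ and $v$ other than $w$, or contradict that $s$ supports $w$ in $G-\{v\}$. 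For case 3, with both ends in $W$: the vertices are adjacent (as $N(v)$ is a clique), so the monitored edge is $ww'$ itself. I would instead use the meg-set property of $M'$ via Corollary~\ref{corollary:a-in-M-implies-a-monitors-incident-edges} to find another vertex of $M'$ and reduce to case 2.

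Case 4 is the edges $vx$. If $x\in M$, then $\{v,x\}$ monitors $vx$. Otherwise I need $b\in M$ such that every shortest $v$--$b$ path uses $vx$. Since $G$ is $2$-connected (Lemma~\ref{lemma:2-connect}) and not complete (else all vertices are simplicial and mandatory), $x$ has a neighbor $y\notin N[v]$, or $x$ is simplicial-like. I would pick an edge $xy$ with $y$ farthest from $v$ in a suitable sense and take the pair $\{a,b\}\subset M'$ monitoring $xy$. Lemma~\ref{lemma:ab-monitors-e-implies-all-shortest-path-to-an-extremity} then gives an endpoint $b$ whose geodesics to $x$ pass through $y$, and I would argue that $v$--$b$ geodesics pass through $x$.

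I expect cases 2 and 4 to be the main obstacle. In both, one must exclude a competing neighbor of $v$ giving an alternative geodesic. The tool will be chord-chasing in the cycle formed by two geodesics and the clique $N[v]$, combined with the "no $4$-cycle through $swv$" property from the failed support.
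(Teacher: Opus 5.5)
\textbf{There are genuine gaps in cases 2, 3 and 4.} Your top-level reduction matches the paper: show $M$ is a meg-set, then conclude $Mand(G)\subseteq M\subseteq Mand(G)$. Your four-way case split is also the paper's, and case 1 is fine.

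\textbf{Edges inside $N(v)$.} Your case 3 always has $e=ww'$ with $w,w'\in N(v)$. Case 2 can also produce an edge with both endpoints in $N(v)$, for example $e=wb$ when $b\in N(v)$. For such an edge, \emph{no} pair containing $v$ can monitor it: a $v$--$c$ geodesic using $ww'$ would begin $vww'$ or $vw'w$, which is not shortest. On the other hand, the proof of Lemma~\ref{lemma:e-monitored-by-ab-nonadjacent-to-v} shows that shortest paths between two vertices other than $v$ are identical in $G$ and $G-\{v\}$. So in your framing, where no pair of $M'\setminus W$ monitors $e$ in $G-\{v\}$, the only pairs of $M$ that could monitor $e$ in $G$ contain $v$. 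These sub-cases must therefore be shown impossible, which means actually constructing a monitoring pair inside $M'\setminus W$. ``Replace $w$ by $v$'' can never do this. Corollary~\ref{corollary:a-in-M-implies-a-monitors-incident-edges} supplies no new vertex either, since the partner of $w$ for $ww'$ may just be $w'$.

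The missing idea, used in Lemmas~\ref{lemma:case2} and~\ref{lemma:case3}, is to exploit that $w'$ is not a support of $w$ in $G$ (and vice versa):
\begin{enumerate}
\item Pick $a_1\in N(w)$ with $w'wa_1$ induced and in no $4$-cycle, and symmetrically $b_1\in N(w')$.
\item Take far endpoints $a_n,b_m$ of pairs monitoring $wa_1$ and $w'b_1$. They lie outside $N(v)$, hence in $M$ by Lemma~\ref{lemma:u-mand-in-G-v-equiv-u-mand-in-G}.
\item Chord-chase to show that $\{a_n,b_m\}$ monitors $e$. In the adjacent sub-case of case 2 the same is done from one side, giving $\{b,a_1\}$ or $\{b,a_k\}$.
\end{enumerate}

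\textbf{Case 2 with $b\notin N(v)$.} Even where $\{v,b\}$ is the right target, your mechanism does not work. The $G-\{v\}$ support $s$ of $w$ has no reason to lie on the cycle formed by the two geodesics and $wy$. So Lemma~\ref{lemma:induced-2-path-in-a-cycle-has-a-chord} yields no chord involving $s$, and your information on $s$ is purely local. The paper instead uses that $w$ has no support in $G$ at all, and shows that the successor $a_1$ of $w$ on a $w$--$b$ geodesic would be one:
\begin{itemize}
\item for $x\in N[v]$, via the forced chord $a_1a'$;
\item for other $x\in N(w)$, via a vertex $y_x$ such that $\{w,y_x\}$ monitors $wx$, unless $\{y_x,b\}$ already monitors $e$.
\end{itemize}

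\textbf{Case 4.} Your structural fact about $s$ is actually what this case needs, but your selection rule is wrong. Every vertex of $N(x)\setminus N[v]$ is at distance $2$ from $v$, so ``farthest'' selects nothing. You need:
\begin{itemize}
\item a vertex $y\in N(x)\setminus N[v]$ with no neighbour in $N(v)\setminus\{x\}$. Such a $y$ exists exactly when $v$ is not a support of $x$ in $G$, which is the paper's split; for $x\in W$ you can take $y=s$.
\item a monitoring pair of $xy$ taken in $M$ (using cases 1--3), not in $M'$, since an endpoint in $W$ is not in $M$.
\end{itemize}
Only then does an alternative geodesic $vx'\dots$ create the induced $2$-path $yxx'$ inside a cycle. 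The forced chord at $x$ in that cycle contradicts the geodesics.
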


 Recall that $W \subset N(v)$ is the set of vertices that are mandatory in $G -\{v\}$ but not in $G$. We prove that $M = \{v\} \cup Mand(G - \{v\})\setminus W$ is a meg-set of $G$. Since $M$ only contains mandatory vertices by Lemma~\ref{lemma:M-subset-mand-G}, this is enough to yield Theorem~\ref{thm:mainthm}. 
 In what follows, we introduce notations that we use throughout Lemma~\ref{lemma:case1} to Lemma~\ref{lemma:case4}.
Let $e$ be an edge of $G$. The proof will be a case distinction depending on where $e$ lies in $G$. 
Finally, for a path $P = a_1a_2 \dots a_k$, we denote $|P|$ the length of $P$, which is the number of edges in $P$.

\begin{proof}
We treat cases based on the following disjunction:
\begin{itemize}
    \item either $e\in E(G-\{v\})$, and
    \begin{itemize}
        \item $e$ is monitored (in $G-\{v\}$) by two vertices of $Mand(G-\{v\})\setminus W$~(Lemma~\ref{lemma:case1});
        \item $e$ is monitored (in $G-\{v\}$) by a vertex of $Mand(G-\{v\})\setminus W$ and a vertex of $W$~(Lemma~\ref{lemma:case2}) which are either non-adjacent (first case) or adjacent (second case);
        \item $e$ is monitored (in $G-\{v\}$) by two vertices of $W$~(Lemma~\ref{lemma:case3}).
    \end{itemize}
    \item or $e\notin E(G-\{v\})$, that is $e$ incident to $v$~(Lemma~\ref{lemma:case4}).
\end{itemize}
Since $Mand(G-\{v\})$ is a meg-set of $G-\{v\}$, then every $e\in E(G-\{v\})$ is monitored by two vertices of $Mand(G-\{v\})$, hence all cases are covered.

\end{proof}

\begin{lemma}\label{lemma:case1}
    If $e\in E(G-\{v\})$, and $e$ is monitored in $G-\{v\}$ by two vertices of $Mand(G-\{v\})\setminus W$, then $e$ is monitored by two vertices of $Mand(G)$.
\end{lemma}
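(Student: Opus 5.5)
Let $a,b \in Mand(G-\{v\})\setminus W$ be two vertices monitoring $e$ in $G-\{v\}$. The plan is to show that $\{a,b\}$ itself still monitors $e$ in $G$ and that both $a$ and $b$ lie in $Mand(G)$. Together these give the claim.

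\textbf{Monitoring.} Since $a$ and $b$ are vertices of $G-\{v\}$, we have $a,b \in V(G)-\{v\}$. Lemma~\ref{lemma:e-monitored-by-ab-nonadjacent-to-v} then applies directly: adding the simplicial vertex $v$ creates no new shortest path between $a$ and $b$, because a path through $v$ would contain two adjacent neighbours of $v$ and so could be shortcut. Hence every shortest $a$--$b$ path in $G$ is a shortest $a$--$b$ path in $G-\{v\}$. Every such path contains $e$, so $\{a,b\}$ monitors $e$ in $G$.

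\textbf{Mandatoriness.} Both $a$ and $b$ belong to $Mand(G-\{v\})\setminus W$, and $W$ is by definition the set of vertices mandatory in $G-\{v\}$ but not in $G$. It follows that $a,b \in Mand(G)$. Equivalently, Lemma~\ref{lemma:M-subset-mand-G} gives $M \subset Mand(G)$, and $a,b \in M$.

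No step here is a real obstacle, since all the work was done in Lemma~\ref{lemma:e-monitored-by-ab-nonadjacent-to-v}. The only point needing care is checking that its hypotheses hold: the monitoring pair must avoid $v$, which is automatic because $a,b$ are vertices of $G-\{v\}$.
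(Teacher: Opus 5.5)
Your proof is correct and takes the same approach as the paper's, which simply invokes Lemma~\ref{lemma:e-monitored-by-ab-nonadjacent-to-v} to conclude that $\{a,b\}$ still monitors $e$ in $G$. You also check explicitly that $a,b\in Mand(G)$, by the definition of $W$ or via Lemma~\ref{lemma:M-subset-mand-G}; the paper leaves this step implicit.
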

\begin{proof}
    If $e \in E(G -\{v\})$ such that $e$ is monitored by two vertices $\{a,b\}$ in $G -\{v\}$ that are in $Mand(G -\{v\}) \setminus W$, then by Lemma~\ref{lemma:e-monitored-by-ab-nonadjacent-to-v}, $\{a,b\}$ monitors $e \in E(G)$.
\end{proof}
\begin{lemma}\label{lemma:case2}
    If $e\in E(G-\{v\})$, and $e$ is monitored by a vertex of $Mand(G-\{v\})\setminus W$ and a vertex of $W$, then $e$ is monitored by two vertices of $Mand(G)$. 
\end{lemma}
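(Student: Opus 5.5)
The plan is to replace the vertex $w \in W$ by the simplicial vertex $v$, which lies in $M$ and is mandatory in $G$ by Lemma~\ref{lemma: simplicial vertices are mandatory}. Let $e = xy$ be monitored in $G-\{v\}$ by $\{a,w\}$ with $a \in Mand(G-\{v\})\setminus W$ and $w \in W$. By Corollary~\ref{corr:W} we have $w \in N(v)$. The goal is to show that $\{a,v\}$, or a suitable pair built from $a$, $v$ and vertices of $M$, monitors $e$ in $G$.

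The key structural input comes from $w$ being mandatory in $G-\{v\}$ but not in $G$. Theorem~\ref{theorem:allmeg} gives a support $s$ of $w$ in $G-\{v\}$. Since $w$ is not mandatory in $G$, $s$ is not a support in $G$. The only new induced $2$-paths through $w$ in $G$ are of the form $swv$. So $s \notin N(v)$, and $swv$ lies in no $4$-cycle: no vertex other than $w$ is adjacent to both $s$ and $v$. I would use this, together with chordality through Lemma~\ref{lemma:induced-2-path-in-a-cycle-has-a-chord}, to show that $w$ is the unique ``gateway'' from $v$ towards the part of the graph where $a$ and $e$ sit. Concretely, I want to show that every shortest $v$--$a$ path in $G$ passes through $w$ and then continues along a shortest $w$--$a$ path in $G-\{v\}$.

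I would first split according to Lemma~\ref{lemma:case2}'s two cases.

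\emph{Adjacent case.} If $a$ and $w$ are adjacent, the only $a$--$w$ shortest path is the edge $aw$, so $e = aw$. If $a \in N(v)$, then $a$ is not simplicial-free of $v$ and I would instead use Corollary~\ref{corollary:a-in-M-implies-a-monitors-incident-edges} in $G-\{v\}$. Otherwise $d(v,a) = 2$, and it suffices that $w$ is the only common neighbour of $v$ and $a$. Suppose a second common neighbour $y$ exists. Then $y \in N(v)$, so $y$ is adjacent to $w$. I would combine the support $s$ of $w$ with the $4$-cycles guaranteed for the induced $2$-paths $swa$ (or their absence) to contradict the choice of $s$ or the chordality of $G$.

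\emph{Non-adjacent case.} Here I would apply Lemma~\ref{lemma:ab-monitors-e-implies-all-shortest-path-to-an-extremity} to reduce to an endpoint, say $x$, of $e$. Then either all shortest $a$--$x$ paths contain $e$, and I would try to pair $a$ with another vertex of $M$, or all shortest $w$--$x$ paths contain $e$. In the latter case I would prove that shortest $v$--$a$ paths decompose as $v$, then $w$, then a shortest $w$--$a$ path. This requires $d(v,a) = d(w,a)+1$ and no neighbour $z \ne w$ of $v$ with $d(z,a) = d(w,a)$.

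The main obstacle is excluding such a $z$. Such a $z$ is adjacent to $w$ (since $N(v)$ is a clique) and gives a competing route. My first attempt would take a shortest $z$--$a$ path and a shortest $w$--$a$ path and close them into a cycle through the edge $zw$. I would then repeatedly apply Lemma~\ref{lemma:induced-2-path-in-a-cycle-has-a-chord} to obtain chords that produce either a shortest $w$--$a$ path avoiding $e$ (contradicting monitoring in $G-\{v\}$) or a common neighbour of $s$ and $v$ other than $w$ (contradicting the choice of $s$). If this fails, I would fall back on Lemma~\ref{lemma:2-connect} to choose the support $s$ more cleverly, for instance closest to $a$.

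Once the decomposition holds, $e$ lies on every shortest $v$--$a$ path in $G$. Lemma~\ref{lemma:e-monitored-by-ab-nonadjacent-to-v} then ensures that the $w$--$a$ portion contains no shortcut through $v$, so $\{v,a\} \subseteq Mand(G)$ monitors $e$.
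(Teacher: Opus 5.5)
\textbf{There is a genuine gap.} Your argument depends on showing that $\{v,a\}$ monitors $e$, i.e.\ that every shortest $v$--$a$ path in $G$ enters through $w$. In the adjacent case this means $w$ is the only common neighbour of $v$ and $a$. The alternative ``pair $a$ with another vertex of $M$'' is never specified. The claim is false, already for $2$-connected chordal graphs, so neither a cleverer choice of $s$ nor Lemma~\ref{lemma:2-connect} can save it.

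Start from the triangle $vwz$ and add, in turn:
$p$ adjacent to $w,z$; $s$ adjacent to $w,p$; $r$ adjacent to $w,s$; $u$ adjacent to $w,r$; $c$ adjacent to $p,s$; $a$ adjacent to $p,c$.
This is a $2$-tree, hence chordal and $2$-connected, and $v$ is simplicial.
\begin{itemize}
\item In $G-\{v\}$, $s$ supports $w$, via the $4$-cycles $swzps$ and $swurs$.
\item In $G$, no neighbour of $w$ supports $w$. Each of $v,s,r,u$ fails on $vws$, $swv$, $rwv$, $uwv$ respectively: the only other neighbour of $v$ is $z$, which is adjacent to none of $s,r,u$. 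Each of $z,p$ fails on $zwu$, $pwu$ respectively: the only other neighbour of $u$ is $r$, adjacent to neither. So $w\in W$.
\item The vertex $a$ is simplicial, hence $a\in Mand(G-\{v\})\setminus W$, and it is non-adjacent to $w$. Since $wpa$ is the unique shortest $w$--$a$ path, $\{w,a\}$ monitors $e=wp$ in $G-\{v\}$.
\item But $vzpa$ is a shortest $v$--$a$ path avoiding $e$. The competitor $z$ with $d(z,a)=d(w,a)$ that you wanted to exclude really exists. In fact no pair containing $v$ monitors $e$: a shortest path from $v$ using $e$ must start with $vwp$, and can be rerouted along $vzp$.
\item Likewise $p\notin N(v)$ is mandatory ($s$ supports it via $spzws$ and $spacs$) and adjacent to $w$. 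Yet $z$ is a second common neighbour of $v$ and $p$, so your adjacent case fails too.
\end{itemize}
Here $e$ is monitored by $\{u,a\}$ and by $\{u,p\}$, where $u$ lies on the far side of $w$. Your subcase $a\in N(v)$ is not handled either: Corollary~\ref{corollary:a-in-M-implies-a-monitors-incident-edges} only returns a partner in $Mand(G-\{v\})$, and that partner may be $w$ itself.

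\textbf{The missing idea is where the substitute for $w$ comes from.} Your observation that a support $s$ of $w$ in $G-\{v\}$ satisfies $N(s)\cap N(v)=\{w\}$ is correct, but it is not what drives the proof. The paper uses $\{v,a\}$ only as a first attempt. When that fails, it uses the fact that particular neighbours of $w$ are \emph{not} supports of $w$ in $G$.
\begin{itemize}
\item In the adjacent case $e=wa$, the candidate is $a$ itself, which yields an induced $2$-path $awx$ lying in no $4$-cycle.
\item In the non-adjacent case, the candidate is the successor $w_1$ of $w$ on a shortest $w$--$a$ path. First, Lemma~\ref{lemma:induced-2-path-in-a-cycle-has-a-chord} shows that $w_1$ is adjacent to the competitor $z$. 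Then every $x'\in N[v]\setminus\{w,z\}$ closes the $4$-cycle $w_1wx'zw_1$. So the failure is witnessed by an induced $2$-path $w_1wx$ in no $4$-cycle with $x\notin N[v]$.
\end{itemize}
Next one takes either $x$ itself (if $x\in Mand(G)$), or a vertex $y\in Mand(G-\{v\})\setminus N(v)$ such that $\{w,y\}$ monitors $wx$ in $G-\{v\}$. Such a $y$ lies in $Mand(G)$ by Lemma~\ref{lemma:u-mand-in-G-v-equiv-u-mand-in-G}. A further chord-chasing argument with Lemma~\ref{lemma:induced-2-path-in-a-cycle-has-a-chord} then shows one of two things. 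Either this vertex together with $a$ monitors $e$, or the candidate ($a$, resp.\ $w_1$) is a support of $w$ in $G$ after all, contradicting $w\in W$. In the example above, $x=y=u$.
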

\begin{proof}

If $e \in E(G -\{v\})$ such that $e$ is monitored by two vertices $\{a,b\}$ in $G -\{v\}$ with $a \in W$ and $b \in Mand(G -\{v\}) \setminus W$, we have the following two cases depending whether $a$ and $b$ are adjacent or not.
    \begin{itemize}
        \item We first treat the sub-case where $a$ and $b$ are non-adjacent. Note that $b$ cannot be adjacent to $v$ since $v$ is simplicial and adjacent to $a$.  If all the shortest path from $v$ to $b$ start with the edge $va$, then $e$ is on all the shortest path from $v$ to $b$ and thus it is monitored by $\{v, b\}$.
        Suppose this is not the case. Hence, there is a vertex $a' \in N(v)$ and a shortest path going from $v$ to $b$ that starts with the edge $va'$.
        Let $P = vaa_1\dots a_{k-1}b$ be a shortest path of length $k$ from $a$ to $b$ going through $a$, and $P' = va' a_1' \dots a_{\ell-1}' b$ be a shortest path from $v$ to $b$ going through $a'$ and avoiding $e$. Note that if such a path $P'$ does not exist, then $\{v, b\}$ monitors $e$. Since $P'$ is a shortest path, we have $|P'| \le |P|$. Thus, $\ell \le k$. Moreover, since $aa'a_1' \dots a_{\ell-1}'b$ is not a shortest path, from $a$ to $b$, we have $k < \ell+1$, hence $k = l$.

        Now, $a$ cannot have any neighbor $a'_i$, otherwise $aa'_i a'_{i+1}\dots a_{\ell-1}b$ would also be a shortest path from $a$ to $b$, and $a$ cannot have a neighbor $a_i$ for $i \ne 1$ as $P$ is chordless. Thus, by Lemma~\ref{lemma:induced-2-path-in-a-cycle-has-a-chord}, applied to $a_1aa'$ in the  cycle $aa_1\dots c \dots a'_1a'a$, where $c$ is the first vertex common to $P$ and $P'$ other than $v$, we must have an edge $a_1a'$. Note that such a vertex $c$ exists as $b\in P\cap P'$.

        We prove that under this conditions, either $a$ is mandatory or we will find a vertex $y$ such that $\{y,b\}$ monitors $e$.
        Let $x \in N(a)$ be a vertex. If $x \in N[v]$, since $a'$ is a neighbor of $a_1$, $a_1axa'a_1$ is a $4$-cycle, as $v$ is simplicial. Otherwise, let $y_x$ be a vertex such that $\{a, y_x\}$ monitors the edge $ax$ in $G-v$. Note that this vertex exists by Lemma~\ref{lemma:ab-monitors-e-implies-all-shortest-path-to-an-extremity}. Under this condition, $y_x$ cannot be in $N(v)$, otherwise it would be adjacent to $a$, contradicting that all shortest paths from $a$ to $y_x$ goes through the edge $ax$. If $\{y_x, b\}$ monitors $e$, then we are done as $y_x$ cannot be in $N(v)$, thus cannot be in $W$ by Corollary~\ref{corr:W}.

        Otherwise, let $P_x$ be a shortest path from $a$ to $y_x$, and denote it by  $P_x = ax x_1 \dots x_{m-1}y_x$. Let $P_y$ be a shortest path from $y_x$ to $b$ that does not contain $e$ (possibly of length $0$ if $y_x =b$). In particular $P_y$ cannot contain $a$ as all shortest paths from $a$ to $b$ contain $e$. Hence, there exists a cycle $C$ containing only vertices from $P_x, P_y$ and $P$. Note that by construction, this cycle has to contain $a, x_1$ and $a_1$ as $P_y$ cannot contain $a$. If there are no chords adjacent to $a$ in $C$, $x_1a_1$ is an edge by Lemma~\ref{lemma:induced-2-path-in-a-cycle-has-a-chord}, and $x_1xaa_1$ is a $4$-cycle. Otherwise there is a chord from $a$ to some vertex $z$ in $C$. Since both $P$ and $P_x$ are chordless and contain $\{a\}$, we necessarily have $z \in P_y$. Denote by $P_1$ the path from $z$ to to $y_x$ and by $P_2$ the path from $z$ to $b$, such that $P_y = P_1P_2$. Since $aP_1$ is not a shortest path form $a$ to $y_x$ as it does not contain $ax$, we have $|P_1| \ge m$, and since $P_2$ is not a shortest path from $a$ to $b$ as it does not contains $e$, we have $|P_2| \ge k$. But, since $P_y$ is a shortest path from $y_x$ to $b$, we need to have $|P_1| + |P_2| \le |P_x| + |P|$, \textit{i.e.} $|P_1|+ |P_2| \le k+m$, which implies $|P_1| = m$ and $|P_2| = k$. Hence, $a$ cannot be adjacent to any other vertex of $P_1$ ($P_2$ resp.) as otherwise, it would create a shortest path from $a$ to $y_x$ (to $b$ resp.), that does not contain $ax$ ($e$ resp.). We denote $c_1$ the vertex common to $P_1$ and $P_x$ that is the farthest away from $y_x$, so that $P_1 = y_x\dots c_1 P_1'$ and $P_x = y_x\dots c_1 P_x'$ for some paths $P_1', P_x'$ which must be disjoint. Conversely, we denote $c_2$ the vertex common to $P_2$ and $P$ that is the farthest away from $b$, so that $P_2 = P_2'c_2\dots b$ and $P = P''c_2 \dots b$ for some paths $P_2', P''$ which must be disjoint. Thus, by Lemma~\ref{lemma:induced-2-path-in-a-cycle-has-a-chord}, applied in the cycle obtained from $az, P_1'$ and $P_x'$; or $az, P_2'$ and $P''$, we have $x$ and $a_1$ are both adjacent to $z$, ensuring that $xaa_1zx$ is a $4$-cycle.

        Finally, we proved that either one vertex $y_x$ satisfy that $\{y_x,b\}$ monitors $e$, or all the induced $2$-paths $a_1ax$ with $x$ a neighbor of $a$ are in $4$-cycles, and $a_1$ is a support vertex for $a$, contradicting that $a$ is not mandatory and thus in $W$.

        \begin{figure}[!ht]
            \centering
            \begin{tikzpicture}[inner sep = .7mm, scale = .8]
                \tikzstyle{every node} = [black, circle, draw, fill]

                \node[] (v) at (5, 5) [label = above right : $v$]{};
                \node[] (a) at (4.3, 4) [label = above left : $a$]{};
                \node[] (a'1) at (5.7, 4) [label = right : $a'_1$]{};

                \node[] (a1) at (4.3, 3.5) [label = below right : $a_1$]{};

                \node[] (e1) at (4.3, 2.5) []{};
                \node[] (e2) at (4.3, 2) []{};

                \node[] (ak1) at (4.3, 1)[label = above right : $a_{k-1}$]{};
                \node[] (al1) at (5.7, 1)[label = above right : $a'_{\ell-1}$]{};

                \node[] (b) at (4.3, .5)[label = below : $b$]{};

                \node[] (x) at (2.9, 4) [label = above : $x$]{};

                \node[] (y) at (1, 4.2) [label = above : $y_x$]{};

                \node[] (z) at (1.5, 2.2) [label = below left : $z$]{};

                \draw[] (v) -- (a);
                \draw[] (v) -- (a'1);

                \draw[] (a) -- (x);
                \draw[dashed] (x) -- (a1);

                \draw[] (a) -- (a1);
                \draw[] (a1) -- (a'1);
                \draw[] (a) -- (a'1);
                \draw[snake it] (a'1) -- (al1);
                \draw[] (e1) -- (e2) node[midway, right, draw = none, fill =none]{$e$};
                \draw[snake it] (a1) -- (e1);
                \draw[snake it] (e2) -- (ak1);
                \draw[] (ak1) -- (b);
                \draw[] (al1) -- (b);

                \draw[snake it] (y) -- (x);
                \draw[] (z) -- (x);
                \draw[] (z) -- (a);
                \draw[] (z) -- (a1);
                \draw[snake it] (y) to[out = -90, in  = 110] (z);
                \draw[snake it] (z) to[out = -75, in  = 160] (b);

                \draw [decorate,decoration={brace,amplitude=5pt,raise=4ex}]
  (y.north) -- (a.north) node[draw = none, fill = none, midway,yshift=3em]{$|P_x| = m$};
                \draw [decorate,decoration={brace,mirror, amplitude=5pt,raise=4ex}]
  (y.south west) -- (z.south west) node[draw = none, fill = none, midway,yshift=-1em, xshift =-5em]{$|P_1|\geq m$};
                \draw [decorate,decoration={brace,mirror, amplitude=5pt,raise=4ex}]
  (z.south west) -- (b.south west) node[draw = none, fill = none, midway,yshift=-3em, xshift =-2em]{$|P_2|\geq k$};
            \end{tikzpicture}
    \caption{Representation of the proof of Lemma~\ref{lemma:case2}, that is $e \in E( G -\{v\})$, $a \in W$, $b \notin W$, where $a$ and $b$ are not adjacent.}
            
\end{figure}

        \item We can now assume that $a$ and $b$ are adjacent, that is, $e = ab$. If $\{b,v\}$ monitors $e$, then we are done. Otherwise, if $b$ is a support for $a$, then $a$ is mandatory, which contradicts $a \in W$. Hence, there exists $a_1 \in N(a)$ such that $baa_1$ is an induced two path that is not is a $4$-cycle. If $a_1$ is mandatory, as $baa_1$ is the only $2$-path between $b$ and $a_1$, then $\{b,a_1\}$ would monitor $ab$ and we are done. Otherwise, by Lemma~\ref{lemma:ab-monitors-e-implies-all-shortest-path-to-an-extremity} there exist $a_k$ such that $\{a, a_k\}$ monitors $aa_1$ in $G -\{v\}$. In particular $a_k \notin N(v)$, otherwise, since $v$ is simplicial and adjacent to $a$, $a_k$ would be adjacent to $a$. Therefore, by Lemma~\ref{lemma:u-mand-in-G-v-equiv-u-mand-in-G}, $a_k\in M$. We denote by $P = aa_1a_2\dots a_k$ a shortest path from $a$ to $a_k$. If all the shortest path from $b$ to $a_k$ goes through $e$, then $\{b, a_k\}$ monitors $e$ and we are done. Otherwise there exist a shortest path $P' = b_\ell b_{\ell-1}  \dots b_1 b$ of length $\ell$ from $b$ to $b_\ell = a_k$ that does not go through $e$. Since it is a shortest path, we have $\ell \le k+1$. Moreover since $abb_1 \dots b_{\ell-1} a_k $ is not a shortest path, otherwise $aa_1$ would not be monitored by $\{a,a_k\}$, we have $\ell +1 > k$. All together, we have $\ell \in \{k, k+1\}$. We also denote $c$ the vertex common to $P$ and $P'$ which is farthest away from $a_k$, so that $P = aP_a c\dots a_k$ and $P' = a_k\dots cP_bb$ where $P_a$ and $P_b$ are disjoint.

        \medskip

        \underline{If $\ell = k$:} The vertex $a$ cannot have any neighbor $b_i$ in $P'$ other than $b$, otherwise $ab_i\dots b_k$ would be a shortest path from $a$ to $a_k$ that does not contain the edge $aa_1$. Hence, by Lemma~\ref{lemma:induced-2-path-in-a-cycle-has-a-chord} applied to $a_1ab$ in the cycle $aP_acP_bba$, $a_1$ and $b$ are adjacent, contradicting the fact that $baa_1$ is an induced $2$-path. A representation of this configuration is shown in the left graph of Figure~\ref{fig:aWbnotinW}.

\begin{figure}
    \centering
    \begin{tikzpicture}[inner sep = .7mm, scale =1]
        \tikzstyle{every node} = [black, circle, draw, fill]
        \node[] (v) at (0, 0)[label = above:$v$]{};
        \node[] (a) at (.5, -.7)[label = above right:$a$]{};
        \node[] (b) at (1.8, -.9)[label = above:$b$]{};
        \node[] (a1) at (0, -1.5)[label = left:$a_1$]{};
        \node[] (b1) at (1.2, -1.5)[label = right:$b_1$]{};
        \node[] (ak) at (-1, -3)[label = left : $a_k$]{};
        \node[] (bi) at (.8, -1.75)[label = below right : $b_i$]{};
        \draw[] (v)--(a);
        \draw[] (a)--(b);
        \draw[dashed] (a1) -- (b);
        \draw[] (a)--(a1);
        \draw[] (b)--(b1);
        \draw[snake it] (a1)--(ak);
        \draw[snake it] (b1)--(ak);
        \draw[red] (a) -- (bi);

        \draw [decorate,decoration={brace, mirror, amplitude=5pt,raise=4ex}]
  (a.west) -- (ak.north west) node[draw = none, fill = none, midway,xshift=-4em, yshift = 2em]{$|P| = k$};
  \draw [decorate,decoration={brace, amplitude=5pt,raise=4ex}]
  (b.east) -- (ak.north east) node[draw = none, fill = none, midway,xshift=4em, yshift = -2em]{$|P'| = l = k$};
    \end{tikzpicture}
    \begin{tikzpicture}[inner sep = .7mm, scale =1]
        \tikzstyle{every node} = [black, circle, draw, fill]
        \node[] (v) at (0, 0)[label = above:$v$]{};
        \node[] (a) at (.5, -.7)[label = above right:$a$]{};
        \node[] (b) at (1.8, -.9)[label = above:$b$]{};
        \node[] (a1) at (0, -1.5)[label = left:$a_1$]{};
        \node[] (b1) at (1.2, -1.5)[label = right:$b_1$]{};
        \node[] (ak) at (-1, -3)[label = left : $a_k$]{};
        \node[] (bi) at (.8, -1.75)[label = below right : $b_i$]{};
        \draw[] (v)--(a);
        \draw[] (a) -- (b1);
        \draw[] (a)--(b);
        \draw[] (a)--(a1);
        \draw[dashed] (a1) -- (b1);
        \draw[] (b)--(b1);
        \draw[snake it] (a1)--(ak);
        \draw[snake it] (b1)--(ak);
        \draw[red] (a) -- (bi);

        \draw [decorate,decoration={brace, mirror, amplitude=5pt,raise=4ex}]
  (a.west) -- (ak.north west) node[draw = none, fill = none, midway,xshift=-4em, yshift = 2em]{$|P| = k$};
  \draw [decorate,decoration={brace, amplitude=5pt,raise=4ex}]
  (b.east) -- (ak.north east) node[draw = none, fill = none, midway,xshift=4em, yshift = -2em]{$|P'| = k+1$};
    \end{tikzpicture}
    \caption{Representation of the proof of Lemma~\ref{lemma:case2}, that is $e\in E(G -\{v\})$, where $a\in W$, $b\notin W$, $a$ and $b$ are adjacent.}
    \label{fig:aWbnotinW}
\end{figure}
        \medskip

        \underline{If $\ell = k+1$:} The same argument as in the case $\ell = k$ works, except that now, the chord $ab_1$ can exist. Note that this is the only chord that can be attached to $a$ without breaking the minimality of $P$, the minimality of $P'$, or the fact that $\{a,a_k\}$ monitors $aa_1$. If $ab_1$ does not exist, then for the same reason as in the previous case, we would have the edge $a_1b$ contradicting that $a_1ab$ is an induced $2$-path. Now if $ab_1$ exists, consider the cycle $aP_acP_ba$. In this cycle, $a$ cannot have any other chord adjacent to it, otherwise, it would contradict the fact that $P$ or $P'$ is a shortest path. Hence, by Lemma~\ref{lemma:induced-2-path-in-a-cycle-has-a-chord}, there is an edge $a_1b_1$. Hence, $baa_1b_1b$ is a $4$-cycle, contradicting the fact that $baa_1$ is in no $4$-cycle. A representation of this configuration is shown in the right graph of Figure~\ref{fig:aWbnotinW}.

        All together, $b$ is a support for $a$ in $G$, contradicting that $a$ is not mandatory in $G$.

    \end{itemize}

\end{proof}
\begin{lemma}\label{lemma:case3}
    If $e\in E(G-\{v\})$, and $e$ is monitored by two vertices of $W$, then $e$ is monitored by two vertices of $Mand(G)$. 
\end{lemma}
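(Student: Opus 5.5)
Since $W \subset N(v)$ by Corollary~\ref{corr:W} and $v$ is simplicial, any two vertices $a,b\in W$ are adjacent. Hence if $\{a,b\}$ monitors $e$ in $G-\{v\}$, the only shortest path between them is the edge $ab$, so $e=ab$. The task is therefore to find two vertices of $M=\{v\}\cup Mand(G-\{v\})\setminus W$ that monitor the edge $ab$ in $G$. The plan mimics the second item of the proof of Lemma~\ref{lemma:case2}, applied simultaneously on the side of $a$ and on the side of $b$.

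\textbf{Step 1 (witness neighbours).} Since $a\notin Mand(G)$, the vertex $b$ is not a support of $a$ in $G$ (Theorem~\ref{theorem:allmeg}). So there is $x\in N(a)$ such that $bax$ is an induced $2$-path contained in no $4$-cycle. Since $v$ is adjacent to $b$, we have $x\neq v$, and $x\notin N(v)$ because $x\notin N(b)$ and $N(v)$ is a clique. Symmetrically, there is $x'\in N(b)$, with $x'\notin N[v]$, such that $abx'$ is an induced $2$-path in no $4$-cycle.

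\textbf{Step 2 (monitoring vertices).} By Lemma~\ref{lemma:ab-monitors-e-implies-all-shortest-path-to-an-extremity}, applied in $G-\{v\}$, there is $y\in Mand(G-\{v\})$ such that $\{a,y\}$ monitors $ax$. Then $y\notin N(v)$, since otherwise $y$ would be adjacent to $a$. By Lemma~\ref{lemma:u-mand-in-G-v-equiv-u-mand-in-G} this gives $y\in M$. Symmetrically we obtain $y'\in M$ with $\{b,y'\}$ monitoring $bx'$. Let $P$ be a shortest $a$--$y$ path starting with $ax$, of length $m$, and $P'$ a shortest $b$--$y'$ path starting with $bx'$, of length $m'$. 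By Lemma~\ref{lemma:e-monitored-by-ab-nonadjacent-to-v}, these monitoring properties persist in $G$.

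\textbf{Step 3 (main claim).} I aim to show that $\{y,y'\}$ monitors $ab$ in $G$. Suppose not, and let $Q$ be a shortest $y$--$y'$ path avoiding $ab$. We have $|Q|\le m+m'+1$, because $P$, the edge $ab$ and $P'$ together give a $y$--$y'$ walk of that length. On the other hand, since $ax$ lies on every shortest $a$--$y$ path, and $bx'$ on every shortest $b$--$y'$ path, distance counting as in Lemma~\ref{lemma:case2} pins down where $a$ and $b$ can have chords into $Q$. Exactly as there, the lengths of the pieces of $Q$ are forced to equal $m$ and $m'$, respectively $m$ and $m'+1$. Now take the cycle formed by $P$, $ab$, $P'$ and $Q$, shortened to the segments between the last common vertices. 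Apply Lemma~\ref{lemma:induced-2-path-in-a-cycle-has-a-chord} to the induced $2$-path $xab$ (and then to $abx'$). The chord at $a$ cannot go into $P$ or $P'$, since these are shortest paths and $bax$ is induced. So it goes to some $z\in Q$. The same chord-pushing argument as in the first item of Lemma~\ref{lemma:case2} then shows that $z$ is adjacent to both $x$ and $b$. This makes $xazbx$ a $4$-cycle containing $bax$, a contradiction. If instead no chord at $a$ exists, the lemma gives $xb\in E$, again a contradiction.

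\textbf{Main obstacle.} I expect the hard part to be the length bookkeeping in Step 3. Unlike in Lemma~\ref{lemma:case2}, the path $Q$ can interact with both $P$ and $P'$ at once. I will need a case split on whether the chord from $a$ (or from $b$) lands in the part of $Q$ near $y$ or near $y'$, and I must make sure that in each case some $2$-path $xab$ or $abx'$ is forced into a $4$-cycle. A degenerate case also needs handling: if $y=x$ (so $x\in M$ and $\{a,x\}$ monitors $ax$), or similarly $y'=x'$, the argument is the same with a path of length $1$.
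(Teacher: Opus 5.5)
\textbf{There is a genuine gap in Step 3, at the point you flagged as the main obstacle.} Your Steps 1--2 match the paper's setup: your $x,x',y,y'$ are its $a_1,b_1,a_n,b_m$, and the paper proves the same claim, that $\{y,y'\}$ monitors $ab$. But the bookkeeping from the first item of Lemma~\ref{lemma:case2} does not transfer. There, the two lower bounds on the pieces of $P_y$ add up exactly to the upper bound $k+m$. Here, a single chord $az$ with $z\in Q$ gives only two bounds:
\begin{itemize}
\item $|Q_1|\ge m$ for the $y$--$z$ part of $Q$, since $az\ldots y$ avoids $ax$;
\item $|Q_2|\ge m'-1$ for the $z$--$y'$ part, since $baz\ldots y'$ avoids $bx'$.
\end{itemize}
Against $|Q|\le m+m'+1$ this leaves a slack of $2$, so no length is forced. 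Without tightness the chord-pushing gives nothing. To get $xz\in E$ you need $|Q_1|=m$. To get $zb\in E$ you would need $|Q_2|=m'-1$, since otherwise $a$ may have further neighbours on $Q$ towards $y'$, and these satisfy Lemma~\ref{lemma:induced-2-path-in-a-cycle-has-a-chord}. So your claim that $z$ is adjacent to both $x$ and $b$ is unsupported. (Also, the resulting cycle would be $xabzx$, not $xazbx$.)

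\textbf{The missing idea is to combine the chords at $a$ and at $b$ into a common neighbour on $Q$.} Among vertices $c_i,c_j$ of the relevant segment of $Q$ with $ac_i,bc_j\in E$, choose $|i-j|$ minimum. If $i\neq j$, the cycle $ac_i\ldots c_jba$ has length at least $4$ and no chord: none inside $Q$, which is a shortest path, and none at $a$ or $b$ by minimality. This contradicts chordality, so some $c\in Q$ is adjacent to both $a$ and $b$. Write $d_Q(\cdot,\cdot)$ for lengths along $Q$. Since $ac\ldots y$ avoids $ax$ and $bc\ldots y'$ avoids $bx'$, you get $d_Q(y,c)\ge m$ and $d_Q(c,y')\ge m'$.
\begin{itemize}
\item Suppose $d_Q(y,c)=m$. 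Then $a$ has no chord in the cycle formed by $ca$ and suitable parts of $P$ and of the $y$--$c$ part of $Q$. So Lemma~\ref{lemma:induced-2-path-in-a-cycle-has-a-chord} forces $xc\in E$, and $xabcx$ is a $4$-cycle through $bax$, a contradiction. Hence $d_Q(y,c)\ge m+1$.
\item Symmetrically, $d_Q(c,y')\ge m'+1$.
\end{itemize}
Then $|Q|\ge m+m'+2$, contradicting $|Q|\le m+m'+1$. This is how the paper concludes.

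Two smaller omissions:
\begin{itemize}
\item Your cycle requires $P\cap P'=\emptyset$. The paper proves this separately. It also follows from your own observation that $a$ has no chord into $P\cup P'$: if they met, apply it to the cycle formed by $P$, $P'$ and $ab$.
\item The reason $a$ has no chord into $P'$ is that $ax'\notin E$ and $\{b,y'\}$ monitors $bx'$, not that $bax$ is induced.
\end{itemize}
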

\begin{proof}
    Suppose that $e \in E(G -\{v\})$ such that $e$ is monitored by two vertices $\{a,b\}$ in $G -\{v\}$ with $a,b \in W$. Intuitively, we will have the same argument as in the case where $b \notin W$ and $a$ and $b$ are adjacent, but on the two extremities of $e$. 

    \smallskip
    
    Since $W \subset N(v)$ and $v$ is simplicial, necessarily, $e = ab$ at it is the only shortest path between $a$ and $b$. Since $a$ and $b$ are in $W$, they are not mandatory. Thus, $a$ is not a support for $b$ neither is $b$ for $a$. Let $a_1$ be a neighbor of $a$ such that $baa_1$ is an induced $2$-path that is not in a $4$-cycle. Similarly, let $b_1$ be a neighbor of $b$ such that $abb_1$ is an induced $2$-path that is not in a $4$-cycle. Note that this construction implies that $a_1b, ab_1$, and $a_1b_1$ are non-edges. Let $a_n, b_m$ be two vertices such that $a_1a$ is monitored by $\{a_n,a\}$ and $b_1b$ is monitored by $\{b_m,b\}$  in $G - \{v\}$. Let $P_a = aa_1a_2 \dots a_n$ be a shortest path from $a$ to $a_n$ and $P_b = bb_1 \dots b_m$ be a shortest path from $b$ to $b_m$. Note that $a_n$ and $b_m$ cannot be in $N(v)$, otherwise, they would be adjacent to $a$ and $b$, and thus they could not monitor $aa_1$ and $bb_1$. Hence, by Lemma~\ref{lemma:u-mand-in-G-v-equiv-u-mand-in-G}, $a_n$ and $b_m$ are mandatory in $G$.

    We first prove that $P_a$ and $P_b$ are disjoint. Suppose they are not and let $c = a_i = b_j$ be the closest vertex from $a$ and $b$ in which they intersect. Necessarily, we have $i \ge 2$ and $j \ge 2$ by construction of $a_1$ and $b_1$. Up to exchange $P_a$ and $P_b$, suppose $i \le j$. By application of Lemma~\ref{lemma:induced-2-path-in-a-cycle-has-a-chord}, since $ab_1$ is a non-edge, there is a chord incident to $b$ in the cycle $aa_1\dots a_i b_{j-1} \dots b_1 b a$. Since $P_b$ is a shortest path, this chord must be from $b$ to some vertex $a_\ell$ for $2 \le \ell \le i$. Consider the path $b a_\ell a_{\ell +1} \dots a_i b_{j+1} \dots b_m$. As $ i \le j$, it is shorter that $P_b$ and does not contain $bb_1$ contradicting that $\{b, b_m\}$ monitors $bb_1$. Thus $P_a$ and $P_b$ are disjoint.

    We prove that $\{a_nb_m\}$ monitors $e$. Suppose by contradiction that this is not the case. Hence, there is a shortest path $P_c = a_n c_1, \dots c_k b_m$ shorter or equal to $a_na_{n-1} \dots a_1abb_1 \dots b_{m-1}b_m$. Hence $k+1 \le (n-1) + (m-1) + 3$, i.e. $k \le n+m$. 
    In this path, we denote $P_a'$ (resp. $P_b'$) the longest prefix (resp. suffix) of $a_n c_1, \dots c_k b_m$ ending (resp. starting) with a vertex $c_a$ of $P_a$ (resp. $c_b$ of $P_b$), so that we can define $P_c'$ with $a_n c_1, \dots c_k b_m = P_a'c_aP_c'c_bP_b'$. Note that $C= aa_1\dots c_aP_c'c_b\dots b_1ba$ is a cycle of length at least $4$.
    
    Since $G$ is chordal and $a_1b$ is a non-edge, by Lemma~\ref{lemma:induced-2-path-in-a-cycle-has-a-chord} any cycle containing $a_1ab$ contains a chord incident to $a$. Similarly, any cycle containing $abb_1$ contains a chord incident to $b$. In particular, this is true for the cycle $C$. Since $aa_1\dots a_n$ is a shortest path, their is no chord $aa_i$. Moreover, since $bb_1\dots b_m$ is a shortest path, and $ab_1$ is a non-edge, for $1 \le i \le m$, there is no chord $ab_i$, as otherwise there would be a shortest path $bab_i b_{i+1}\dots b_m$ that does not contain $bb_1$. Similarly, there are no chords $ba_i$ nor $bb_j$. Hence, both $a$ and $b$ have a chord to some vertices $c_i$ and $c_j$ of $P_c'$.

    \smallskip

    We prove now that there exists $1 \le i \le k$ such that $c_i$ is adjacent to $a$ and to $b$. Let $1 \le i,j \le k$ such that $a$ is adjacent to $c_i$, $b$ is adjacent to $c_j$, $c_i, c_j \in P_c'$ and $|j-i|$ is minimized. Suppose by contradiction $i \neq j$, say $i<j$ without loss of generality, the case $i > j$ being similar. The cycle $ac_ic_{i+1}\dots c_{j-1}c_jba$ has length at least $4$, and contains no chord, as, for $i < \ell,\ell' < j$ there are no chord $c_\ell c_{\ell'}$ by minimality of the path $c_1\dots c_m$, and no chord $ac_\ell$ nor $b c_\ell$ by minimality of $|j-i|$, which contradicts that $G$ is chordal. Thus, there exists an integer $i$ such that $a$ and $b$ are adjacent to $c_i$. Since $\{a, a_n\}$ monitors $aa_1$, we have $i +1 \ge n +1$, and since $\{b, b_m\}$ monitors $bb_1$, we have $k-i +2 \ge m +1$. We now prove that these two inequalities are strict. We prove it for the first one, the similar argument works for the second. Suppose $i = n$, and consider the cycle $aa_1\dots c_a\dots c_ia$. By minimality of the paths, $a$ cannot be adjacent to any other $c_j$ with $1 \le j \le i$, nor to any $a_p$ for $1 \le p \le n$. Thus, by Lemma~\ref{lemma:induced-2-path-in-a-cycle-has-a-chord}, $a_1c_i$ is an edge. Thus $a_1abc_ia_1$ is a $4$-cycle which is a contradiction with the definition of $a_1$. Thus, $i \ge n +1$. We prove similarly that $k-i+2 \ge m+2$
    
    Combining the two inequalities, we have $k+2 \ge m+n +3$, i.e. $k \ge m+n+1$, which contradicts the fact that $a_nc_1\dotsc_kb_m$ is a shortest path from $a_n$ to $b_m$, as it is longer than $a_n a_{n-1}\dots a_1abb_1 \dots b_{m-1}b_m$ and thus, $\{a_n, b_m \}$ monitors $e$. A representation is shown in Figure~\ref{fig:abW}.
    \begin{figure}
        \centering
        \begin{tikzpicture}[inner sep  = .7mm, scale = 1]
                    \tikzstyle{every node} = [black, circle, draw, fill]
                \node[] (v) at (0, 0)[label = above : $v$]{};
                \node[] (a) at (-.5, -.7)[label = above left: $a$]{};
                \node[] (b) at (.5, -.7)[label = above right: $b$]{};
                \node[] (a1) at (-1.2, -1.2)[label = left: $a_1$]{};
                \node[] (b1) at (1.2, -1.2)[label = right: $b_1$]{};
                \node[] (an) at (-2.5, -3)[label = below left: $a_n$]{};
                \node[] (bm) at (2.5, -3)[label = below right: $b_m$]{};
                \node[] (ci) at (-1, -3)[label = above left : $c_i$]{};

                \draw[] (a)--(b);
                \draw[] (a)--(v);
                \draw[] (b)--(v);
                \draw[] (a)--(a1);
                \draw[] (b)--(b1);
                \draw[red] (a)--(b1);
                \draw[red] (b)--(a1);
                \draw[red] (a1)--(b1);

                \draw[snake it] (a1)--(an);
                \draw[snake it] (b1)--(bm);
                \draw[snake it] (an)--(bm);
                \draw[] (a)-- (ci);
                \draw[] (b)--(ci);

                \draw [decorate,decoration={brace, mirror, amplitude=5pt,raise=4ex}]
  (a.west) -- (an.west) node[draw = none, fill = none, midway,xshift=-4em, yshift = 2em]{$|P_a| = n$};
                \draw [decorate,decoration={brace, amplitude=5pt,raise=4ex}]
  (b.east) -- (bm.east) node[draw = none, fill = none, midway,xshift=4em, yshift = 2em]{$|P_b| = m$};
                \draw [decorate,decoration={brace, mirror, amplitude=5pt,raise=4.5ex}]
  (an.south) -- (ci.south) node[draw = none, fill = none, midway, yshift = -3em]{$i > n$};
                \draw [decorate,decoration={brace, mirror, amplitude=5pt,raise=4.5ex}]
  (ci.south) -- (bm.south) node[draw = none, fill = none, midway, yshift = -3em]{$k-i > m$};
                \draw [decorate,decoration={brace, mirror, amplitude=5pt,raise=2ex}]
  (an.north) -- (bm.north) node[draw = none, fill = none, midway, yshift = -2em]{$|P| = k$};
        \end{tikzpicture}
        \caption{A representation of the proof of Lemma~\ref{lemma:case3}, that is $e\in E(G -\{v\})$, where $a\in W$ and $b\in W$.}
        \label{fig:abW}
    \end{figure} 
\end{proof} 
\begin{lemma}\label{lemma:case4}
    If $e\notin E(G-\{v\})$, then $e$ is monitored by two vertices of $Mand(G)$.
\end{lemma}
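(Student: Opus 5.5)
Here $e = va$ with $a \in N(v)$, and the goal is to find two vertices of $M$ that monitor $e$ in $G$. Since $v \in M$ is simplicial, every shortest path from $v$ to any vertex $y \neq v$ leaves through some neighbour of $v$. So $\{v,y\}$ monitors $va$ exactly when every shortest path from $v$ to $y$ starts with the edge $va$. The natural candidates for $y$ are therefore vertices that are strictly closer to $a$ than to every other neighbour of $v$.

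\textbf{Case $a \in M$.} If $a$ is mandatory in $G$, then $\{v,a\}$ monitors $va$ trivially, since the edge is the unique shortest path.

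\textbf{Case $a \in W$.} Now $a$ is mandatory in $G-\{v\}$ but not in $G$. Take $v$ as the candidate support of $a$ in $G$. Because $a$ is not mandatory, $v$ is not a support, so some $x \in N(a)$ makes $vax$ an induced $2$-path not contained in any $4$-cycle.
\begin{enumerate}
\item Since $v$ is simplicial, $x \notin N[v]$.
\item By Corollary~\ref{corollary:a-in-M-implies-a-monitors-incident-edges}, applied in $G-\{v\}$ with the meg-set $Mand(G-\{v\})$, there is $y \in Mand(G-\{v\})$ such that $\{a,y\}$ monitors $ax$. If $y$ were in $N(v)$, it would be adjacent to $a$, which is impossible. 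So $y \notin N(v)$, and by Lemma~\ref{lemma:u-mand-in-G-v-equiv-u-mand-in-G}, $y \in M$.
\item Let $P = a x x_1 \dots y$ be a shortest path of length $k$. By Lemma~\ref{lemma:e-monitored-by-ab-nonadjacent-to-v}, every shortest $a$--$y$ path in $G$ uses $ax$.
\item I claim $\{v,y\}$ monitors $va$. Suppose instead that some shortest $v$--$y$ path $P'$ starts with $va'$ for $a' \neq a$. Then $|P'| \le k+1$. Since $a a' \dots y$ cannot be a shortest $a$--$y$ path (it avoids $ax$), we get $|P'| - 1 + 1 > k$, so $|P'| = k+1$. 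In particular, the subpath $P''$ from $a'$ to $y$ has length exactly $k$.
\end{enumerate}

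The argument then mirrors the case $\ell = k$ of Lemma~\ref{lemma:case2}. Let $c$ be the first common vertex of $P$ and $P''$ after $a$ and $a'$; such a vertex exists because $y$ lies on both. Consider the cycle formed by $a'a$, the part of $P$ from $a$ to $c$, and the part of $P''$ back from $c$ to $a'$. In this cycle, $a$ has no chord to a vertex $a'_i$ of $P''$ beyond $a'$, because that would give an $a$--$y$ path of length at most $k$ avoiding $ax$. It also has no chord to $P$, since $P$ is a shortest path. Then Lemma~\ref{lemma:induced-2-path-in-a-cycle-has-a-chord}, applied to the induced $2$-path $xaa'$, yields the edge $xa'$. (This path is induced because $x \notin N(v)$ while $a' \in N(v)$, so $xa'$ could only be present as the chord we are looking for; if it is already present we are done anyway.) Hence $vaxa'v$ is a $4$-cycle, using the edge $va'$. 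This contradicts the choice of $x$, so $\{v,y\}$ monitors $va$ with $v,y \in M$.

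\textbf{Main obstacle.} The delicate step is the chord bookkeeping. It must be verified carefully that $a$ has no other chords in the cycle, including the case where $P$ and $P''$ meet early. It must also be checked that the cycle really has length at least $4$, which holds because $x \neq a'$ and $x \notin N(v)$. The length equality $|P''| = k$ is what rules out every chord from $a$ other than $aa'$, and this is where the argument could break if the inequalities are slightly off.
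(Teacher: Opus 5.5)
Your proof has a real gap: the case analysis is incomplete. You split into $a\in Mand(G)$ and $a\in W$. But a neighbour $a$ of $v$ can be mandatory in neither $G$ nor $G-\{v\}$, and then $a\notin M\cup W$.

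Nothing established so far rules this out. Take $G-\{v\}$ to be the fan formed by a path $p_1\dots p_7$ plus a vertex $a$ adjacent to every $p_i$, and set $N(v)=\{a,p_4\}$. Then $G$ is $2$-connected and chordal, and $v$ is simplicial. For every candidate support $w\in N(a)$ there is an $x$ such that $wax$ is an induced $2$-path lying in no $4$-cycle, both in $G$ and in $G-\{v\}$:
\begin{itemize}
\item if $w=p_j$, take $x=p_i$ with $|i-j|\ge 3$;
\item if $w=v$, take $x=p_1$.
\end{itemize}
So $a$ is mandatory in neither graph.

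In this case your Step~2 fails. Corollary~\ref{corollary:a-in-M-implies-a-monitors-incident-edges} only supplies a partner $y$ for an edge incident to a vertex that \emph{belongs} to the meg-set. Here $a\notin Mand(G-\{v\})$, so you get no $y$ with $\{a,y\}$ monitoring $ax$.

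The fix is what the paper does, and it makes membership of $a$ in $W$ irrelevant. Split instead on whether $v$ is a support of $a$.
\begin{itemize}
\item If it is, then $a\in Mand(G)$ and $\{v,a\}$ monitors $va$.
\item If not, choose $x$ exactly as you do. The edge $ax$ of $G-\{v\}$ is monitored there by some pair $\{\alpha,\beta\}\subseteq Mand(G-\{v\})$. By Lemma~\ref{lemma:ab-monitors-e-implies-all-shortest-path-to-an-extremity}, one of them, call it $y$, has every shortest $y$--$a$ path ending with the edge $xa$. That is, $\{a,y\}$ monitors $ax$ in $G-\{v\}$.
\end{itemize}

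From there your own reasoning carries over unchanged. Since $y\in N(v)$ would force $y=x$, you get $y\notin N(v)$, hence $y\in M$. Step~3 onward also goes through as written: the length equality $|P''|=k$, the cycle through $xaa'$ with no chord at $a$, and the $4$-cycle $vaxa'v$ contradicting the choice of $x$. This part coincides with the paper's argument.

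The paper obtains the pair monitoring $aa_1$ in $G$ via Lemmas~\ref{lemma:case1}--\ref{lemma:case3}. Working in $G-\{v\}$ and lifting with Lemma~\ref{lemma:e-monitored-by-ab-nonadjacent-to-v}, as you do, works equally well and avoids depending on the earlier cases.
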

\begin{proof} 
Assume $e = va \notin E(G -\{v\})$. Recall that $v$ is simplicial, thus mandatory by Lemma~\ref{lemma: simplicial vertices are mandatory}. If all the vertices $a_1 \in N(a) \setminus N(v)$ have a neighbor $w$ in $N(v) \setminus \{a\}$, by definition, $v$ is a support for $a$, end thus, $a$ is mandatory, and $\{a,v\}$ monitors $e$.
     
     Otherwise, there exists $a_1$ in $N(a) \setminus N(v)$ that has no neighbor in $N(v)\setminus \{a\}$. Since $a_1$ is not incident to $v$, using the previous points of the proof, $aa_1$ is monitored by two vertices $\{b,a_n\} \in M$. Using Lemma~\ref{lemma:ab-monitors-e-implies-all-shortest-path-to-an-extremity}, we can suppose that all the shortest path from $a_n$ to $a$ contains $aa_1$. We denote one of these paths by $aa_1\dots,a_n$
     We prove that $\{a_n,v\}$ monitors $e$. Suppose that they don't. There exists another shortest path $va'a'_1\dots a'_{m-1}a_n$ that monitors $e$. By hypothesis, we have $a' \ne a$, otherwise, this path would also go through $e$. Since $v$ is simplicial, $aa'$ is an edge, and since $a_1$ has been chosen such that $a_1$ has no neighbors in $N(v)$, $a'a_1$ is not an edge.
     
     As this path is a shortest path, we have $m \le n$, otherwise it would be longer that $vaa_1\dots a_n$. Moreover, $a_na'_{m-1}\dots a'a$ cannot be a shortest path from $a$ to $a_n$, by hypothesis on $\{a,a_n\}$. Thus $m+1 >n$. This proves that $m=n$. Let us denote $c$ the smallest value $1\leq k\leq n$, such that $a_k = a'_k$ (that is, the first common vertex between $a'_1\dots a_{n-1}'a_n$ and $aa_1\dots,a_n$). Consider now the cycle $C=a_ka'_{k-1}\dots a'_1a'aa_1\dots a_k$. Since $a_1 a'$ is a non-edge, $C$ has length at least $4$, and there is a chord incident to $a$ in that cycle by Lemma~\ref{lemma:induced-2-path-in-a-cycle-has-a-chord}. This is a contradiction as, for $1 \le i \le n$, if this chord goes to a vertex $a_i$, in would contradict the minimality of $aa_1\dots a_n$, and if it goes to a vertex $a'_i$, the path $aa'_ia'_{i+1}\dots a_n$ would be shorter or equal to $aa_1\dots a_n$ without containing $aa_1$. A representation of this configuration is shown in Figure~\ref{fig:av}. This proves that $\{a_n,v\}$ monitors $e$. 
\end{proof}

             \begin{figure}
            \centering
            \begin{tikzpicture}[inner sep = .7mm, scale = .85]
                \tikzstyle{every node} = [black, circle, draw, fill]

                \node[] (v) at (5, 4.5) [label = above right : $v$]{};
                \node[] (a) at (4.3, 4) [label = above left : $a$]{};
                \node[] (a'1) at (5.7, 4) [label = right : $a'$]{};

                \node[] (a1) at (3.3, 4) [label = below : $a_1$]{};

                \node[] (b) at (7.3, 4.5)[label = above right : $b$]{};

                \node[] (an) at (1, 4.2) [label = above : $a_k$]{};
                \node[] (atn) at (0, 4.2) [label = above : $a_n$]{};
                \node[] (a'2) at (5, 3.3)[label = below right : $a'_i$]{};

                \draw[] (v) -- (a);
                \draw[] (v) -- (a'1);
                \draw[snake it] (an) to (atn);

                \draw[] (a) -- (a1);
                \draw[] (a1) -- (a'1);
                \draw[] (a) -- (a'1);

                \draw[snake it] (a) to[out = 90, in  = 110] (b);
                \draw[snake it] (an) -- (a1);
                \draw[snake it] (an) to[out = -40, in  = -130] (a'1);
                \draw[red] (a1) to[out = -30, in  = -150] (a'1);

                \draw[red] (a) -- (a'2);

            \end{tikzpicture}
            \caption{A representation of the proof of Lemma~\ref{lemma:case4}, that is  $e\notin E(G -\{v\})$, where $a\in W$ and $b=v$.}
            \label{fig:av}
        \end{figure}
Finally, we have proved that any edge $e \in E(G)$ is monitored by $M$, hence $M$ is a meg-set. Since all the meg-sets of a graph has to contain all its mandatory vertices, it proves that $mand(G) \subset M$. Overall, it proves that $M = Mand(G)$ is a meg-set of $G$.

\section{An efficient algorithm to compute $Mand(G)$}

In this section, we provide an algorithm to computes $Mand(G)$ whose complexity time is $O\big (|V|(|V| + \Delta^2) \big)$ for any graph $G = (V,E)$. Using the structure provided in Lemma~\ref{lemma:structure-of-Mand(G)}, this algorithm returns a minimum meg-set of $G$ if $G$ is chordal.

\begin{theorem}\label{th:algo}
    Let $G$ be a graph. $Mand(G)$ can be computed in time  $O\big(|V|(|V| + \Delta^2) \big )$, where $\Delta$ is the maximum degree of $G$. In particular, if $G$ is chordal, a minimum meg-set of $G$ can be computed in time $O\big(|V|(|V| + \Delta^2) \big )$.
\end{theorem}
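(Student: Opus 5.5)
The plan is to implement the characterization of Theorem~\ref{theorem:allmeg} directly. Then, for chordal graphs, Theorem~\ref{thm:mainthm} together with the fact that every meg-set contains $Mand(G)$ shows that $Mand(G)$ is itself a minimum meg-set. So the whole task is computing $Mand(G)$ within the stated time bound.

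First I precompute the adjacency matrix of $G$ in $O(|V|^2)$ time, so that adjacency queries take $O(1)$. By Theorem~\ref{theorem:allmeg}, a vertex $u$ is mandatory if and only if some $w\in N(u)$ is a support. Fix $u$. Call a neighbor $x$ of $u$ \emph{good for} $w$ if either $x=w$, or $wx\in E$, or the induced $2$-path $wux$ lies in a $4$-cycle. The last condition means that $w$ and $x$ have a common neighbor $y\neq u$. Then $u$ is mandatory iff some $w\in N(u)$ makes every $x\in N(u)$ good.

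The naive test for a pair $(w,x)$ is to scan $N(w)$ looking for a common neighbor other than $u$. That costs $\deg(w)\le\Delta$ per pair, hence $O(\Delta^3)$ per vertex $u$, which is too slow. The step I expect to be the main obstacle is bringing this down to $O(|V|+\Delta^2)$ per vertex. My first attempt is as follows. For each $x\in N(u)$, mark the set $N(x)\setminus\{u\}$ in a boolean array of size $|V|$; this takes $O(\Delta)$ time per $x$. Then, for each $w\in N(u)$, test whether $N(w)$ meets the marked set. Done pair by pair, this is again $\deg(x)+\deg(w)$ per pair, i.e.\ $O(\Delta^3)$ per $u$.

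To avoid this, I will instead count, for each $u$, the number $c(w,x)$ of common neighbors of $w$ and $x$, excluding $u$. The idea is to perform a single BFS-like pass of depth two from $u$: for each vertex $y$ at distance at most $2$ from $u$, store its neighbors inside $N(u)$ in a list $L_y$, and then add $1$ to $c(w,x)$ for every pair in $L_y$. Honestly, this pair enumeration is bounded by $\sum_y |L_y|^2$, which is not obviously $O(\Delta^2)$. If the charging argument fails, I would fall back on a global precomputation. The plan is to compute once, for every edge $wu$, the set of $x$ for which $wux$ is an induced $2$-path not in a $4$-cycle, and to amortize this cost over the vertices. I would aim to charge each check to a triple $(u,w,x)$ with adjacency tests in $O(1)$, accepting $O(|V|\Delta^2)$ total triples. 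The point would then be to show that the existence of a common neighbor $y\neq u$ can be decided in $O(1)$ amortized time, by reusing the marked array of $N(x)$ across all $w$ for a fixed $x$.

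Finally, summing $O(|V|+\Delta^2)$ over all $u$ gives $O(|V|(|V|+\Delta^2))$. The $|V|$ term covers resetting the arrays, and the adjacency matrix setup is $O(|V|^2)$. The output is $\{u : u \text{ has a support}\}$, which equals $Mand(G)$ by Theorem~\ref{theorem:allmeg}. In the chordal case, Lemma~\ref{lemma:structure-of-Mand(G)} and Theorem~\ref{thm:mainthm} show that this set is a meg-set. Since every meg-set contains $Mand(G)$, it is a minimum one, which gives the second claim.
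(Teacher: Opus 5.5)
You have the reduction right. By Theorem~\ref{theorem:allmeg}, $u$ is mandatory iff some $w\in N(u)$ makes every $x\in N(u)$ good. The path $wux$ lies in a $4$-cycle iff $|N(w)\cap N(x)|\ge 2$, since $u$ is always a common neighbor. The chordal consequence is also handled correctly.

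\textbf{The gap.} The statement is a running-time bound, and the step that carries it is never established: answering the $\Theta(\deg(u)^2)$ pair queries at each $u$ in $O(1)$ each. Your last paragraph sums a per-vertex bound of $O(|V|+\Delta^2)$ that you did not prove. Both concrete methods you propose cost $\Theta(\Delta^3)$ per vertex in general.
\begin{itemize}
\item The depth-two pass costs $\sum_y|L_y|^2$. In $K_{\Delta,\Delta}$, every $y\neq u$ on $u$'s side has $L_y=N(u)$, so this is $\Theta(\Delta^3)$ per vertex against $|V|+\Delta^2=\Theta(\Delta^2)$.
\item Marking $N(x)$ only gives $O(1)$ membership tests. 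To decide whether a given $w$ has a second common neighbor with $x$, you still have to scan $N(w)$. Over the up to $\Theta(\Delta^2)$ vertices $w$ at distance two from $x$, this is again $\Theta(\Delta^3)$. An example is the point--line incidence graph of a projective plane of order $q$: there $|V|+\Delta^2=\Theta(q^2)$, but each point has $\Theta(q^2)$ points at distance two, each of degree $q+1$, and never a second common neighbor to stop the scan early.
\end{itemize}

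\textbf{The missing idea} is to count from the endpoint rather than test from the candidate. Fix $x$, enumerate the walks $x\,y\,z$ with $y\in N(x)$ and $z\in N(y)$, and increment a counter $N_2^x[z]$. This costs $\sum_{y\in N(x)}\deg(y)\le\Delta^2$, plus $O(|V|)$ to reset. Afterwards $N_2^x[w]=|N(x)\cap N(w)|$ for every $w$.

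Now take any $u\in N(x)$ and $w\in N(u)$. The vertex $x$ witnesses that $w$ is not a support of $u$ exactly when $w\neq x$, $A[w,x]=0$ and $N_2^x[w]=1$. This is an $O(1)$ test, so all triples with endpoint $x$ are processed in $O(\Delta^2)$.

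This is essentially the paper's algorithm. It keeps, for each $u$, a list $S(u)$ of candidate supports initialized to $N(u)$, and deletes $w$ from it whenever such an $x$ is found. Then $u$ is mandatory iff $S(u)\neq\emptyset$ at the end.

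Alternatively, you can keep your per-$u$ organization. Fill the full common-neighbor matrix by the same walk enumeration in $O(|V|^2+|V|\Delta^2)$ time. Your test at each $u$ then takes $O(\deg(u)^2)$, and the claimed bound follows.
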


\begin{proof}
Let $G = (V,E)$ be a graph. We suppose that the graph is connected, and given by its adjacency matrix $A$.  We denote the vertices of $G$ by $v_1, \dots v_n$ 
    We propose the following algorithm:

\begin{itemize}
    \item For $1\le i \le n$ and each vertex $v_i$ we compute its set of neighbors $N(v_i)$, which can be done in $O(|V|)$ per vertex. 
    \item Then we create a table $N^i_{2}$ such that $N_2^i[j]$ contains the number of paths of length 2 between $i$ and $j$. This can be done in time $O(\Delta^2)$ per vertex by initializing this matrix with zeros and then adding $1$ in $N_2^i[j]$ for each integer $k \in N(v_j)$ such that $A[j,k] = 1$.
    \item We now compute $M$ as a table of size $n$ by induction. For each vertex, we will also keep in memory the list of its current support vertices. I.e. we start with $S(v_i) = N(v_i)$ for all $1 \le i \le n$.
    \item For $i = 1$ to $n$:
    \begin{itemize}
        \item If $|S(v_i)| \ge 1$, set $M[i]$ to $1$.
        \item For each vertex $v_j \in N(v)$, for each vertex $v_k \in S(v_j)$, if $N_2^i[k] = 1$ and $A[i,k]=0$, remove $v_k$ for $S(v_j)$. Moreover, if $S(v_j) = \emptyset$, set $M[j]$ to 0.  This step can be done in $O(\Delta^2)$. Note that we count the number of path of length $2$ in $G$ and not in $G_i$, thus, it is possible that during this step $M$ is not a meg-set of $G_i$. 
    \end{itemize}
    \item Finally, returns the set of vertices with nonzero value in $M$.
\end{itemize}

We now prove that the set of vertices in $M$ it the set of mandatory vertices of $G$. Let $v_i \in Mand(G)$. During step~$i$, $M[i]$ is turned to $1$. Now, if $v_j$ is a support for $v_i$, for $1 \le k \le n$ during step $k$, if $v_k \in N(v_i)$, either $v_k \in N(v_j)$ or there are at least two paths of length~$2$ from $v_j$ to $v_k$ as $v_j$ is a support vertex for $v_i$. Thus, $v_j$ will not be removed from $S(v_i)$, and thus $v_i$ will never be removed from $M$.

Reciprocally, suppose that $v_j \notin Mand(G)$. Let $v_k \in N(v_j)$ be a vertex. As $v_k$ is not a support for $v_j$, there exist $v_i \in N(v_j) \setminus N(v_k)$ such that $v_jv_iv_k$ is an induced $2$-path that is not in a $4$-cycle. In particular, $v_jv_iv_k$ is the only path of length~$2$ between $v_j$ and $v_k$. During step $i$, when $v_j$ will be considered in $N(v_i)$, if $v_k$ has not been removed yet form $S(v_j)$, we will have $N_2^i[k] = 1$ and $A[i,k] =0$ by hypothesis. Thus, $v_k$ will be removed from $S(v_j)$. Since this is true for all the vertices in $N(v_j)$, all the vertices of $S(v_j)$ will be removed at some points, and thus $M[j]$ will be set to $0$ (or will not be set to $1$ if $v_j$ comes after all its neighbors).

Finally, at the end of the algorithm, all the vertices with nonzero values in $M$ are mandatory, and the algorithm runs in $O\big(|V|(|V| + \Delta^2) \big )$
\end{proof}

\section{Conclusion}
The main contribution of this work is its answer to the open conjecture of~\cite{foucaud2024algorithms} about chordal graphs. In addition, Theorem~\ref{th:algo} provides a dedicated algorithm to compute the set of mandatory vertices of any graph $G$, and thus the minimum meg-set of a chordal graph. From our main result, as well as the structural insights gained through its proof, several research directions naturally emerge:
\begin{itemize}
    \item A first, natural way to go beyond Theorem~\ref{thm:mainthm} is to try and extend it to other classes of graphs. In particular, with the notable exception of cycles, all current algorithms to solve \textsc{MEG-set} in polynomial time are restricted to meg-minimal graph classes. This motivates a systematic study of the class of meg-minimal graphs. We proved that this class contains chordal graphs, and it is already known that it contains the class of cographs. It would be of interest to search for other structural characterizations of it. Another underlying question is to find other methods to compute minimum meg-sets of non meg-minimal graphs in polynomial time.

    \item An insight that can be gathered from our result, as well as the meg-minimality of cographs, is that the $meg$ parameter seems to interact with density. This phenomenon was already observed in~\cite{foucaud:hal-04494089} for graphs with high girth. While some approaches already tackled graph sparsity parameters (see e.g.~\cite{foucaud2024algorithms}), a parameterized study of \textsc{MEG-set} by density parameters, like the independence number, is both relevant regarding our current knowledge, and well-fitting into some more recent framework (see e.g.~\cite{fomin2025pathcoverhamiltonicityindependence}). Moreover, should this kind of parameterization exist, it would open a lot of possibilities to design algorithms to compute meg-sets in graphs of bounded tree-independence number. This is further supported by the fact that chordal graphs are the graphs of tree-independence number $1$. 
\end{itemize}



%
%

\bibliographystyle{splncs04}
\bibliography{mybibliography}

\end{document}